\documentclass[journal]{IEEEtran}

\usepackage{amsmath,amssymb,amsthm,mathtools}
\usepackage{graphicx}
\usepackage{booktabs}
\usepackage{array}
\usepackage{tabularx}
\usepackage[table]{xcolor}
\usepackage{cite}
\usepackage{url}
\usepackage[hidelinks]{hyperref}
\usepackage{xfp} 
\usepackage{tikz}

\newtheorem{theorem}{Theorem}
\newtheorem{lemma}{Lemma}
\newtheorem{proposition}{Proposition}
\newtheorem{corollary}{Corollary}
\newtheorem{claim}{Claim}

\theoremstyle{definition}
\newtheorem{definition}{Definition}

\theoremstyle{remark}

\newcommand{\Nstar}{N^{\star}}
\newcommand{\Shift}{\mathsf{SHIFT}}
\newcommand{\Cover}{\mathsf{COVER}}
\newcommand{\GASP}{\mathsf{GASP}}
\newcommand{\DOG}{\mathsf{DOG}}

\newcommand{\usefulentry}[1]{\cellcolor{red!25}#1}
\newcommand{\maskcolumn}[1]{\cellcolor{blue!25}#1}
\newcommand{\maskrow}[1]{\cellcolor{green!25}#1}

\newcommand{\motivatingscale}{1.10}

\newcommand{\largetablescale}{\fpeval{0.6552*\motivatingscale}}

\title{Improved Degree Tables for \\ Secure Distributed Matrix Multiplication}

\author{John Byrne, Rafael G. L. D'Oliveira, and Michael Tait\thanks{The third author was partially supported by NSF grant DMS 2245556, a Villanova University Summer Grant, and a Villanova CLAS Research Semester.

  } \\ Department of Mathematics, University of California San Diego, USA, jobyrne@ucsd.edu \\ School of Mathematical and Statistical Sciences, Clemson University, USA, rdolive@clemson.edu \\ Department of Mathematics \& Statistics, Villanova University, USA, michael.tait@villanova.edu
}

\begin{document}

\maketitle

\begin{abstract}
In secure distributed matrix multiplication, a user wishes to compute the product of two matrices with the assistance of servers, in such a way that any $T$ colluding servers learn nothing about either matrix. Degree tables are a combinatorial tool for constructing polynomial codes for this problem and underlie several state-of-the-art schemes, including $\GASP_r$, $\GASP_{r,s}$, and $\DOG_{r,s}$. We introduce a periodic-gap framework for constructing degree tables that contains these three families as special cases and leads to two new constructions, $\Shift_{r,s}$ and $\Cover_r$. We determine their exact recovery thresholds and show that, in many cases, they outperform the current state of the art. We also prove a new lower bound on the recovery threshold of ordinary integer degree tables. In the balanced case, in which the partitioning parameters and the security parameter are equal, we sharpen this bound to match $\Shift_{r,s}$ up to lower-order terms, showing that it is asymptotically optimal among degree-table constructions.
\end{abstract}

 

\section{Introduction}
\label{sec:introduction}

We consider the problem of secure distributed matrix multiplication (SDMM). A user has two matrices $A$ and $B$ and wishes to compute $AB$ with the assistance of several servers. The servers perform the requested computations correctly, but any set of up to $T$ servers may collude in an attempt to learn information about $A$ or $B$. Polynomial codes are among the best-performing known approaches to this problem.

We use an outer-product partitioning (OPP), writing $A^\intercal=\begin{bmatrix}A_1^\intercal&\cdots&A_K^\intercal\end{bmatrix}$ and $B=\begin{bmatrix}B_1&\cdots&B_L\end{bmatrix}$, so that $AB$ is the $K\times L$ block matrix with entries $A_kB_\ell$. Computing $AB$ is therefore equivalent to computing all $KL$ subproducts $A_kB_\ell$. To do this securely, we define \begin{align*} 
f(x)&=\sum_{k=1}^{K}A_kx^{a_k}+\sum_{t=1}^{T}R_tx^{c_t},\\ g(x)&=\sum_{\ell=1}^{L}B_\ell x^{b_\ell}+\sum_{t=1}^{T}S_tx^{d_t}, 
\end{align*}
where $R_t$ and $S_t$ are independently and uniformly distributed random matrices. The user sends evaluations of $f$ and $g$ to each server, and the server returns the corresponding evaluation of $h(x)=f(x)g(x)$. From sufficiently many responses, the user recovers the relevant coefficients of $h(x)$ and obtains $A_kB_\ell$ from the coefficient of $x^{a_k+b_\ell}$.

\begin{figure}[!t]
\centering
\setlength{\tabcolsep}{3pt}
\resizebox{\columnwidth}{!}{%
\begin{tabular}{c|ccc|ccc}
 & $b_1$ & $\!\!\cdots\!\!$ & $b_L$ & \cellcolor{blue!25} $d_1$ & \cellcolor{blue!25} $\!\!\cdots\!\!$ & \cellcolor{blue!25} $d_T$ \\
\toprule
$a_1$ & \cellcolor{red!25} $a_1+b_1$ & \cellcolor{red!25} $\cdots$ & \cellcolor{red!25} $a_1+b_L$ & $a_1+d_1$ & $\cdots$ & $a_1+d_T$ \\
$\vdots$ & \cellcolor{red!25} $\vdots$ & \cellcolor{red!25} $\ddots$ & \cellcolor{red!25} $\vdots$ & $\vdots$ & $\ddots$ & $\vdots$ \\
$a_K$ & \cellcolor{red!25} $a_K+b_1$ & \cellcolor{red!25} $\cdots$ & \cellcolor{red!25} $a_K+b_L$ & $a_K+d_1$ & $\cdots$ & $a_K+d_T$ \\
\midrule
\cellcolor{green!25} $c_1$ & $c_1+b_1$ & $\cdots$ & $c_1+b_L$ & $c_1+d_1$ & $\cdots$ & $c_1+d_T$ \\
\cellcolor{green!25} $\vdots$ & $\vdots$ & $\ddots$ & $\vdots$ & $\vdots$ & $\ddots$ & $\vdots$ \\
\cellcolor{green!25} $c_T$ & $c_T+b_1$ & $\cdots$ & $c_T+b_L$ & $c_T+d_1$ & $\cdots$ & $c_T+d_T$ \\
\bottomrule
\end{tabular}%
}
\caption{The degree table for $f(x)g(x)$. The exponents $a_k$ and $b_\ell$ correspond to data blocks, while $c_t$ and $d_t$ correspond to random masks. Each entry is the sum of its row and column exponents and hence a monomial degree in $f(x)g(x)$. The red block corresponds to the desired products, while the other blocks contain interference. Decodability requires every red entry to be unique in the entire table, while security requires the green and blue exponents to be pairwise distinct within their respective sets. The goal is to select exponents that satisfy these constraints while minimizing the number of distinct entries.}
\label{fig:degree-table}
\end{figure}

The number of servers needed to recover $AB$ is called the recovery threshold. For polynomial codes of the above form, the recovery threshold is governed by the choice of exponents and can be studied through a degree table~\cite{9004505}, illustrated in Fig.~\ref{fig:degree-table}. The degree-table viewpoint underlies many of the state-of-the-art constructions for the outer-product partitioning. The main constructions based on degree tables include $\GASP_r$, $\GASP_{r,s}$, and $\DOG_{r,s}$~\cite{9004505,9508383,11195364}. Related extensions include cyclic-addition degree tables for root-of-unity constructions such as CAT~\cite{9965858,11195364} and pole number tables for PoleGap~\cite{10858081}.


\newcommand{\motivatingpanelwidth}{0.24\textwidth}

\newcommand{\firsttablescale}{\fpeval{0.80*\motivatingscale}}

\newcommand{\motivatingcolsep}{1.2pt}

\newcommand{\motivatingrowstretch}{0.90}

\newcommand{\motivatinglabelgap}{4.4em}

\newcommand{\motivatingcaptiongap}{0.9em}

\begin{figure*}[!t]
\centering
\setlength{\tabcolsep}{\motivatingcolsep}
\renewcommand{\arraystretch}{\motivatingrowstretch}

\begin{tabular}{@{}c@{}c@{}c@{}c@{}}

\makebox[\motivatingpanelwidth][c]{%
\scalebox{\firsttablescale}{%
\begin{tabular}{c|ccc|cccc}
 &0&7&14&
 \maskcolumn{19}&\maskcolumn{20}&
 \maskcolumn{21}&\maskcolumn{22}\\
\toprule
0&
 \usefulentry{0}&\usefulentry{7}&\usefulentry{14}&
 19&20&21&22\\
1&
 \usefulentry{1}&\usefulentry{8}&\usefulentry{15}&
 20&21&22&23\\
2&
 \usefulentry{2}&\usefulentry{9}&\usefulentry{16}&
 21&22&23&24\\
3&
 \usefulentry{3}&\usefulentry{10}&\usefulentry{17}&
 22&23&24&25\\
4&
 \usefulentry{4}&\usefulentry{11}&\usefulentry{18}&
 23&24&25&26\\
\midrule
\maskrow{5}&
 5&12&19&
 24&25&26&27\\
\maskrow{6}&
 6&13&20&
 25&26&27&28\\
\maskrow{12}&
 12&19&26&
 31&32&33&34\\
\maskrow{13}&
 13&20&27&
 32&33&34&35\\
\bottomrule
\end{tabular}%
}}%
&
\makebox[\motivatingpanelwidth][c]{%
\scalebox{\firsttablescale}{%
\begin{tabular}{c|ccc|cccc}
 &10&17&24&
 \maskcolumn{0}&\maskcolumn{1}&
 \maskcolumn{2}&\maskcolumn{3}\\
\toprule
2&
 \usefulentry{12}&\usefulentry{19}&\usefulentry{26}&
 2&3&4&5\\
3&
 \usefulentry{13}&\usefulentry{20}&\usefulentry{27}&
 3&4&5&6\\
4&
 \usefulentry{14}&\usefulentry{21}&\usefulentry{28}&
 4&5&6&7\\
5&
 \usefulentry{15}&\usefulentry{22}&\usefulentry{29}&
 5&6&7&8\\
6&
 \usefulentry{16}&\usefulentry{23}&\usefulentry{30}&
 6&7&8&9\\
\midrule
\maskrow{0}&
 10&17&24&
 0&1&2&3\\
\maskrow{1}&
 11&18&25&
 1&2&3&4\\
\maskrow{7}&
 17&24&31&
 7&8&9&10\\
\maskrow{8}&
 18&25&32&
 8&9&10&11\\
\bottomrule
\end{tabular}%
}}%
&
\makebox[\motivatingpanelwidth][c]{%
\scalebox{\largetablescale}{%
\begin{tabular}{c|ccccc|cccccc}
 &0&5&10&15&20&
 \maskcolumn{25}&\maskcolumn{26}&\maskcolumn{27}&
 \maskcolumn{28}&\maskcolumn{29}&\maskcolumn{30}\\
\toprule
0&
 \usefulentry{0}&\usefulentry{5}&\usefulentry{10}&
 \usefulentry{15}&\usefulentry{20}&
 25&26&27&28&29&30\\
1&
 \usefulentry{1}&\usefulentry{6}&\usefulentry{11}&
 \usefulentry{16}&\usefulentry{21}&
 26&27&28&29&30&31\\
2&
 \usefulentry{2}&\usefulentry{7}&\usefulentry{12}&
 \usefulentry{17}&\usefulentry{22}&
 27&28&29&30&31&32\\
3&
 \usefulentry{3}&\usefulentry{8}&\usefulentry{13}&
 \usefulentry{18}&\usefulentry{23}&
 28&29&30&31&32&33\\
4&
 \usefulentry{4}&\usefulentry{9}&\usefulentry{14}&
 \usefulentry{19}&\usefulentry{24}&
 29&30&31&32&33&34\\
\midrule
\maskrow{25}&
 25&30&35&40&45&
 50&51&52&53&54&55\\
\maskrow{26}&
 26&31&36&41&46&
 51&52&53&54&55&56\\
\maskrow{27}&
 27&32&37&42&47&
 52&53&54&55&56&57\\
\maskrow{30}&
 30&35&40&45&50&
 55&56&57&58&59&60\\
\maskrow{31}&
 31&36&41&46&51&
 56&57&58&59&60&61\\
\maskrow{32}&
 32&37&42&47&52&
 57&58&59&60&61&62\\
\bottomrule
\end{tabular}%
}}%
&
\makebox[\motivatingpanelwidth][c]{%
\scalebox{\largetablescale}{%
\begin{tabular}{c|ccccc|cccccc}
 &0&7&14&21&28&
 \maskcolumn{35}&\maskcolumn{36}&\maskcolumn{37}&
 \maskcolumn{38}&\maskcolumn{39}&\maskcolumn{40}\\
\toprule
2&
 \usefulentry{2}&\usefulentry{9}&\usefulentry{16}&
 \usefulentry{23}&\usefulentry{30}&
 37&38&39&40&41&42\\
3&
 \usefulentry{3}&\usefulentry{10}&\usefulentry{17}&
 \usefulentry{24}&\usefulentry{31}&
 38&39&40&41&42&43\\
4&
 \usefulentry{4}&\usefulentry{11}&\usefulentry{18}&
 \usefulentry{25}&\usefulentry{32}&
 39&40&41&42&43&44\\
5&
 \usefulentry{5}&\usefulentry{12}&\usefulentry{19}&
 \usefulentry{26}&\usefulentry{33}&
 40&41&42&43&44&45\\
6&
 \usefulentry{6}&\usefulentry{13}&\usefulentry{20}&
 \usefulentry{27}&\usefulentry{34}&
 41&42&43&44&45&46\\
\midrule
\maskrow{0}&
 0&7&14&21&28&
 35&36&37&38&39&40\\
\maskrow{1}&
 1&8&15&22&29&
 36&37&38&39&40&41\\
\maskrow{7}&
 7&14&21&28&35&
 42&43&44&45&46&47\\
\maskrow{8}&
 8&15&22&29&36&
 43&44&45&46&47&48\\
\maskrow{14}&
 14&21&28&35&42&
 49&50&51&52&53&54\\
\maskrow{15}&
 15&22&29&36&43&
 50&51&52&53&54&55\\
\bottomrule
\end{tabular}%
}}%

\\[\motivatinglabelgap]

\makebox[\motivatingpanelwidth][c]{%
{\footnotesize (a) $\DOG_{2,4}$, $N=34$}}
&
\makebox[\motivatingpanelwidth][c]{%
{\footnotesize (b) $\Shift_{2,4}$, $N=33$}}
&
\makebox[\motivatingpanelwidth][c]{%
{\footnotesize (c) $\GASP_3$, $N=57$}}
&
\makebox[\motivatingpanelwidth][c]{%
{\footnotesize (d) $\Cover_2$, $N=56$}}

\end{tabular}

\vspace{\motivatingcaptiongap}

\caption{Two constructions that improve the state of the art. Panels~(a) and~(b) show $\DOG_{2,4}$~\cite{11195364} and $\Shift_{2,4}$, respectively, for $K=5$, $L=3$, and $T=4$; $\Shift_{2,4}$ lowers the recovery threshold from $34$ to $33$. Panels~(c) and~(d) show $\GASP_3$~\cite{9004505,9508383} and $\Cover_2$, respectively, for $K=L=5$ and $T=6$; $\Cover_2$ lowers the threshold from $57$ to $56$. The red block corresponds to the desired products, and the green and blue headers give the masking exponents.}
\label{fig:motivating-examples}

\end{figure*}

A degree table is determined by four exponent sets. Let $\mathcal A=\{a_1,\ldots,a_K\}$, $\mathcal B=\{b_1,\ldots,b_L\}$, $\mathcal C=\{c_1,\ldots,c_T\}$, and $\mathcal D=\{d_1,\ldots,d_T\}$. As shown in Fig.~\ref{fig:degree-table}, the degree table contains all sums in $(\mathcal A\cup\mathcal C)+(\mathcal B\cup\mathcal D)$. The block $\mathcal A+\mathcal B$ is the useful block, while the other three blocks contain interference terms. Decodability requires every useful entry to occur exactly once in the entire table, while degree-table security requires the elements of $\mathcal C$ and $\mathcal D$ to be pairwise distinct. Over a sufficiently large field, these conditions give a decodable and information-theoretically $T$-secure code whose recovery threshold is the number of distinct entries in the table. Thus, the goal is to choose the four exponent sets so as to minimize this number.

In this paper, we introduce a periodic-gap framework that contains $\GASP_r$, $\GASP_{r,s}$, and $\DOG_{r,s}$ as special cases and leads to two new constructions, $\Shift_{r,s}$ and $\Cover_r$. We derive their recovery thresholds, identify regimes in which they improve the previous constructions, prove a general lower bound for degree tables, and determine the optimum asymptotically in the balanced setting.

\subsection{Related Work}
\label{sec:related_work}

Polynomial codes were introduced for nonsecure distributed matrix multiplication in~\cite{yu2017polynomial} and followed by several extensions~\cite{8006963,8437871,8765375,8949560,yu2019lagrange,9519610,10786350}. Early information-theoretic formulations of SDMM appeared in~\cite{8647313,8382305}. Later work studied straggler tolerance~\cite{8675905}, flexible communication~\cite{8985291}, upload and download costs~\cite{8989342,9440909}, and total computation time~\cite{9162296}.

Degree tables were introduced in~\cite{9004505}. Subsequent work proved lower bounds and developed state-of-the-art constructions with integer exponents~\cite{9508383,11195364}. Beyond this setting, root-of-unity constructions use degree tables modulo an integer~\cite{9965858,11195364}, while PoleGap uses pole-number tables~\cite{10858081}.

Other approaches to outer-product SDMM include A3S~\cite{8675905} and secure bivariate polynomial coding~\cite{9681059}. A general framework based on linear codes and star products unifies many SDMM schemes and handles stragglers and Byzantine servers~\cite{10415397}. Other work studies inner-product partitioning~\cite{9732990,9965839,10206764}, general grid partitioning~\cite{9174167,9229375}, private and secure matrix multiplication with replicated or MDS-coded servers~\cite{9696353}, batch and multi-party matrix multiplication~\cite{9539194,9523544}, adaptive multi-message computation~\cite{9681896}, server cooperation~\cite{9681812}, and reduced downloads from each server~\cite{9606447}.

Degree tables have been extended to secure Gram matrix multiplication through symmetric degree tables~\cite{10161614} and to general grid partitioning~\cite{11653890}. Degree-table methods have also been used for straggler and adversary tolerance~\cite{e25020266,10478018}, precomputation~\cite{10619695}, private information retrieval~\cite{9929410}, and quantum secure distributed matrix multiplication~\cite{nomeir2025quantum,11462243}.

\subsection{Main Contributions}
\label{sec:main-contributions}

Our main contributions are as follows.

\begin{itemize}
\item We introduce a general periodic-gap framework for degree tables that contains $\GASP_r$, $\GASP_{r,s}$, and $\DOG_{r,s}$ as special cases.

\item We introduce two new constructions, $\Shift_{r,s}$ and $\Cover_r$, and derive their recovery thresholds. Together, these constructions improve the best previously known recovery thresholds for ordinary integer degree tables in explicit parameter regimes. In particular, when $T=ur=vs$, we prove $N_{\DOG_{r,s}}-N_{\Shift_{r,s}}=s-r-1$, so $\Shift_{r,s}$ strictly improves $\DOG_{r,s}$ whenever $s\geq r+2$. We also identify regimes in which $\Cover_r$ improves $\GASP_{\mathrm{big}}$, including parameters for which it achieves a new best-known recovery threshold.

\item Writing $\Nstar(K,L,T)$ for the minimum recovery threshold among ordinary integer degree tables, we prove the general lower bound $\Nstar(K,L,T)\geq KL+(KLT^2)^{1/3}$. 

\item In the balanced setting $K=L=T=n$, we prove $\Nstar(n,n,n)=n^2+3n^{4/3}+O(n^{7/6})$. This matches the recovery threshold of $\Shift_{r,s}$ up to lower-order terms, proving that it is asymptotically optimal.
\end{itemize}

\section{Two Motivating Examples}
\label{sec:motivating-examples}

We illustrate the two mechanisms with two small examples in which the new constructions improve the best-known recovery thresholds for OPP SDMM. Fig.~\ref{fig:motivating-examples} compares their degree tables with those of the previous best constructions.

\subsection{The $\Shift_{2,4}$ Construction for $K=5$, $L=3$, and $T=4$}

For partitioning parameters $K=5$ and $L=3$ and security parameter $T=4$, the best previously known recovery threshold for OPP SDMM is $34$, achieved by $\DOG_{2,4}$~\cite{11195364}. More precisely,
\begin{align*}
\min_r N_{\GASP_r}&=35
&&\text{at }r=2,3,\text{ or }4,\\
\min_{r,s}N_{\GASP_{r,s}}&=35
&&\text{at }(r,s)=(2,4),(3,4),\text{ or }(4,4),\\
\min_{r,s}N_{\DOG_{r,s}}&=34
&&\text{at }(r,s)=(2,4).
\end{align*}
The published $\mathrm{CAT}_x$ and PoleGap constructions are not applicable to these parameters \cite{11195364}. The $\DOG_{2,4}$ construction, shown in Fig.~\ref{fig:motivating-examples}(a), uses $\mathcal A=\{0,1,2,3,4\}$, $\mathcal B=\{0,7,14\}$, $\mathcal C=\{5,6,12,13\}$, and $\mathcal D=\{19,20,21,22\}$. The $\Shift_{2,4}$ construction uses $\mathcal A=\{2,3,4,5,6\}$, $\mathcal B=\{10,17,24\}$, $\mathcal C=\{0,1,7,8\}$, and $\mathcal D=\{0,1,2,3\}$. The resulting degree table is shown in Fig.~\ref{fig:motivating-examples}(b).

The improvement comes from increasing the overlap among the interference blocks. In particular, $\mathcal A+\mathcal D=\{2,3,\ldots,9\}$ and $\mathcal C+\mathcal D=\{0,1,\ldots,4\}\cup\{7,8,\ldots,11\}$. These two blocks overlap in the six values $\{2,3,4,7,8,9\}$. The remaining interference block $\mathcal C+\mathcal B$ overlaps their union at the values $10$ and $11$. The blocks $\mathcal A+\mathcal D$, $\mathcal C+\mathcal D$, and $\mathcal C+\mathcal B$ have sizes $8$, $10$, and $8$, respectively, and their union has $18$ distinct values. Therefore, the recovery threshold is $N=15+18=33$, one fewer server than the previous best value $34$.

\subsection{The $\Cover_2$ Construction for $K=L=5$ and $T=6$}

For partitioning parameters $K=L=5$ and security parameter $T=6$, the best previously known recovery threshold for OPP SDMM is $57$~\cite{9508383,11195364}. More precisely,
\begin{align*}
\min_r N_{\GASP_r}&=57
&&\text{at }r=3,\\
\min_{r,s}N_{\GASP_{r,s}}&=57
&&\text{at }(r,s)=(2,3)\text{ or }(3,5),\\
\min_{r,s}N_{\DOG_{r,s}}&=57
&&\text{at }(r,s)=(2,3).
\end{align*}
The $\mathrm{CAT}_x$ and PoleGap constructions are not applicable to these parameters \cite{11195364}. The $\GASP_3$ construction, shown in Fig.~\ref{fig:motivating-examples}(c), uses $\mathcal A=\{0,1,2,3,4\}$, $\mathcal B=\{0,5,10,15,20\}$, $\mathcal C=\{25,26,27,30,31,32\}$, and $\mathcal D=\{25,26,27,28,29,30\}$. The $\Cover_2$ construction uses $\mathcal A=\{2,3,4,5,6\}$, $\mathcal B=\{0,7,14,21,28\}$, $\mathcal C=\{0,1,7,8,14,15\}$, and $\mathcal D=\{35,36,37,38,39,40\}$. The resulting degree table is shown in Fig.~\ref{fig:motivating-examples}(d).

The improvement comes from making one interference block entirely redundant. In particular, $\mathcal A+\mathcal D=\{37,38,\ldots,46\}\subseteq\mathcal C+\mathcal D=\{35,36,\ldots,55\}$, so $\mathcal A+\mathcal D$ contributes no new degrees. The remaining interference block $\mathcal C+\mathcal B$ contains $14$ distinct values and overlaps $\mathcal C+\mathcal D$ in the four values $\{35,36,42,43\}$. Since $\mathcal C+\mathcal D$ has $21$ distinct values, the three interference blocks have $31$ distinct values in total. Therefore, the recovery threshold is $N=25+31=56$, one fewer server than the previous best value $57$.

\section{Main Results}
\label{sec:main-results}

We now state the main constructions and bounds. Their proofs are given in the following sections. Throughout, we assume $L\le K$, since the case $K<L$ follows from $AB=(B^\top A^\top)^\top$. For integers $a$ and $b$, write $[a,b)=\{a,a+1,\ldots,b-1\}$ if $a<b$, and $[a,b)=\varnothing$ otherwise.

\subsection{The Periodic-Gap Framework}

We introduce a periodic-gap framework for building degree tables. Its exponent sets are all periodic-gap sets: intervals, or more generally sets formed by repeating an interval at regular spacing along the integers. The framework unifies the principal constructions in the literature based on ordinary integer degree tables, with $\GASP_r$, its two-parameter extension $\GASP_{r,s}$, and $\DOG_{r,s}$ arising as special cases. Our new constructions, $\Shift_{r,s}$ and $\Cover_r$, use the same periodic-gap sets.

\begin{definition}
\label{def:periodic-gap-set}
For positive integers $T$, $P$, and $w$, we denote by $\operatorname{gap}(T,P,w)$ the set of $T$ smallest elements of $P\mathbb Z_{\ge 0}+[0,w)$. We call it a periodic-gap set, with period $P$ and width $w$.
\end{definition}

In other words, we begin with the interval $[0,w)$, repeat it every $P$ positions, and stop after selecting $T$ elements. For example, $\operatorname{gap}(5,7,2)=\{0,1,7,8,14\}$. The first two elements form the block $\{0,1\}$, the next two form the translated block $\{7,8\}$, and the final element begins the next block. Whenever $w\ge T$, all $T$ elements fit in the first block, so $\operatorname{gap}(T,P,w)=[0,T)$. Thus, an interval is a special case of a periodic-gap set.

\begin{definition}
\label{def:periodic-gap-framework}
A degree table belongs to the periodic-gap framework if its exponent sets can be written as
\begin{align*}
\mathcal A&=x_A+[0,K),&
\mathcal B&=x_B+P[0,L),\\
\mathcal C&=x_C+\operatorname{gap}(T,P,r),&
\mathcal D&=x_D+\operatorname{gap}(T,P,s),
\end{align*}
for some positive integers $P,r,s$ and integer shifts $x_A,x_B,x_C,x_D$, where $P[0,L)=\{0,P,2P,\ldots,(L-1)P\}$.
\end{definition}

The parameters $P$, $r$, and $s$ determine the shapes of the exponent sets, while the shifts $x_A$, $x_B$, $x_C$, and $x_D$ determine how these sets are placed relative to one another.

\begin{proposition}
\label{prop:periodic-gap-unification}
The families $\GASP_r$, $\GASP_{r,s}$, and $\DOG_{r,s}$ belong to the periodic-gap framework.
\end{proposition}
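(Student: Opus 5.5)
The plan is to prove Proposition~\ref{prop:periodic-gap-unification} directly from the definitions. For each family, I will recall the explicit exponent sets from the original papers~\cite{9004505,9508383,11195364}. I will then exhibit parameters $(P,r',s')$ and shifts $(x_A,x_B,x_C,x_D)$ for which the four sets of Definition~\ref{def:periodic-gap-framework} coincide with the published ones. Only set equality is needed: the degree table depends only on the sets $\mathcal A,\mathcal B,\mathcal C,\mathcal D$, not on their indexing.

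The basic tool is an explicit description of $\operatorname{gap}(T,P,w)$ for $w\le P$. Since the blocks $Pj+[0,w)$ are disjoint and increasing in $j$, the $t$-th smallest element ($t=1,\dots,T$) of $P\mathbb Z_{\ge0}+[0,w)$ is
\[
P\left\lfloor \tfrac{t-1}{w}\right\rfloor + \big((t-1)\bmod w\big).
\]
For $w\ge T$ the set reduces to $[0,T)$, as already noted after Definition~\ref{def:periodic-gap-set}. Every construction in the literature writes its mask exponents in exactly this floor/mod form, or as a plain interval, so each verification is a term-by-term comparison of formulas.

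\textbf{$\GASP_r$.} Here $\mathcal A=[0,K)$ and $\mathcal B=K[0,L)$. The mask exponents are $c_t=KL+K\lfloor (t-1)/r\rfloor+((t-1)\bmod r)$ and $d_t=KL+t-1$. Taking $P=K$, $x_A=x_B=0$, $x_C=x_D=KL$, width $r$ for $\mathcal C$, and width $s=T$ for $\mathcal D$ (so that $\operatorname{gap}(T,K,T)=[0,T)$) gives exactly these sets. This matches the $\GASP_3$ example in Fig.~\ref{fig:motivating-examples}(c): there $\mathcal C=25+\operatorname{gap}(6,5,3)$ and $\mathcal D=25+[0,6)$.

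\textbf{$\GASP_{r,s}$.} This family differs from $\GASP_r$ only in that the $\mathcal D$-exponents are also given with a floor/mod pattern of width $s$ and period $K$. The same choices of $P$ and shifts work, now with $\mathcal D=KL+\operatorname{gap}(T,K,s)$.

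\textbf{$\DOG_{r,s}$.} Guided by the example in Fig.~\ref{fig:motivating-examples}(a), I expect the following choices:
\begin{itemize}
\item $P=K+r$, with $\mathcal A=[0,K)$ and $\mathcal B=(K+r)[0,L)$;
\item $\mathcal C=K+\operatorname{gap}(T,K+r,r)$, so $x_C=K$;
\item $\mathcal D=(L-1)(K+r)+K+\operatorname{gap}(T,K+r,s)$, or whatever shift the published definition prescribes.
\end{itemize}
In the example these give $\mathcal C=5+\{0,1,7,8\}$ and $\mathcal D=19+[0,4)$.

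The main obstacle is this $\DOG_{r,s}$ case, and more generally the bookkeeping of edge cases. First, the published definitions may use a period other than $K$ or $K+r$, or a separate offset for $\mathcal D$. I would handle this by transcribing the published formulas exactly and reading $P$ and the shifts off them. Second, the floor/mod description of $\operatorname{gap}$ requires $w\le P$. I must check that the allowed parameter ranges guarantee $r\le P$ and $s\le P$, or else fall in the case $w\ge T$, where the set is an interval. If some family permits $s>P$ with $s<T$, I would instead show that the published set still equals the $T$ smallest elements of $P\mathbb Z_{\ge0}+[0,s)$, which is then an interval. Since that set is simply $[0,T)$, I would choose width $T$ for it.
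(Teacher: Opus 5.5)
Your proposal is correct and follows the same route as the paper, which proves the proposition simply by exhibiting the parameters: $P=K$, shifts $x_A=x_B=0$, $x_C=x_D=KL$, and widths $r$ and $T$ (respectively $r$ and $s$) for $\GASP_r$ (respectively $\GASP_{r,s}$); and $P=K+r$ with $x_A=x_B=0$, $x_C=K$, $x_D=P(L-1)+K$ and widths $r,s$ for $\DOG_{r,s}$. Your guessed $\DOG_{r,s}$ choices coincide exactly with these, so the hedge is unnecessary, and your floor/mod description of $\operatorname{gap}(T,P,w)$ (valid because the stated parameter ranges force $r,s\le P$ in all three families) just makes explicit the term-by-term check that the paper leaves implicit.
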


Indeed, it is enough to choose the period, widths, and shifts as follows. The family $\GASP_r$, for $1\le r\le\min\{K,T\}$, uses period $P=K$ and
\begin{align*}
\mathcal A&=[0,K),&
\mathcal B&=K[0,L),\\
\mathcal C&=KL+\operatorname{gap}(T,K,r),&
\mathcal D&=KL+[0,T).
\end{align*}
Thus, $\mathcal C$ is a periodic-gap set of width $r$, while $\mathcal D$ is an interval. The choice $r=\min\{K,T\}$ is denoted by $\GASP_{\mathrm{big}}$.

The two-parameter family $\GASP_{r,s}$, defined for $K,L,T\ge2$ and $1\le r,s\le\min\{K,T\}$, keeps the same period and placement, but replaces $\mathcal D$ by a second periodic-gap set:
\begin{align*}
\mathcal A&=[0,K),&
\mathcal B&=K[0,L),\\
\mathcal C&=KL+\operatorname{gap}(T,K,r),&
\mathcal D&=KL+\operatorname{gap}(T,K,s).
\end{align*}

The family $\DOG_{r,s}$, defined for $K,L,T\ge2$, $1\le r\le T$, and $1\le s\le\min\{T,K+r\}$, uses the larger period $P=K+r$ and a different placement:
\begin{align*}
\mathcal A&=[0,K),&
\mathcal B&=P[0,L),\\
\mathcal C&=K+\operatorname{gap}(T,P,r),&
\mathcal D&=P(L-1)+K+\operatorname{gap}(T,P,s).
\end{align*}

The two constructions introduced in this paper are new within the same framework: $\Shift_{r,s}$ is designed to increase the overlap among the three interference blocks, while $\Cover_r$ places one interference block entirely inside another.

\subsection{The \texorpdfstring{$\Shift_{r,s}$}{SHIFT(r,s)} Construction}

We now use the periodic-gap framework to define the first new family, $\Shift_{r,s}$. Like $\DOG_{r,s}$, the construction uses periodic-gap sets of widths $r$ and $s$. The difference is in their placement: $\Shift_{r,s}$ shifts the four exponent sets with the aim of increasing the overlap among the three interference blocks.

Let $r$ and $s$ satisfy $1\le r<s\le \min\{T,K+1\}$, and set $P=K+r$. Let $u=\lceil T/r\rceil$ and $v=\lceil T/s\rceil$ be the numbers of blocks in the two periodic-gap sets, and let $\rho=T-(u-1)r$ and $\sigma=T-(v-1)s$ be the sizes of their final blocks, so that $1\le\rho\le r$ and $1\le\sigma\le s$.

\begin{definition}
\label{def:shift-main-results}
The $\Shift_{r,s}$ construction uses exponents
\begin{align*}
\mathcal A&=(u-2)P+r+[0,K),\\
\mathcal B&=vP+\rho+\sigma-r-1+P[0,L),\\
\mathcal C&=\operatorname{gap}(T,P,r),\\
\mathcal D&=\operatorname{gap}(T,P,s).
\end{align*}
\end{definition}

\begin{proposition}
\label{prop:shift-decodable}
The $\Shift_{r,s}$ construction is decodable and $T$-secure.
\end{proposition}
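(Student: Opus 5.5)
The plan is to verify the two conditions of the degree-table framework separately. Security is immediate. $\operatorname{gap}(T,P,r)$ and $\operatorname{gap}(T,P,s)$ are by definition sets of $T$ distinct integers. Hence the elements of $\mathcal C$ are pairwise distinct, and so are those of $\mathcal D$.

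Decodability requires every entry of $\mathcal A+\mathcal B$ to occur exactly once in the whole table. Write $\beta=vP+\rho+\sigma-r-1$, so that $\mathcal B=\beta+P[0,L)$, and recall $P=K+r$.

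\emph{Step 1: uniqueness inside $\mathcal A+\mathcal B$.} The set $\mathcal A=(u-2)P+r+[0,K)$ is an interval of length $K<P$. Its elements therefore lie in distinct residue classes mod $P$, namely the classes $r,r+1,\dots,P-1$. Now suppose $a+b=a'+b'$. Since $b\equiv b'\pmod P$, we get $a\equiv a'\pmod P$, hence $a=a'$ and then $b=b'$.

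\emph{Step 2: no collision with $\mathcal C+\mathcal B$.} I would compare residues mod $P$. Every element of $\mathcal C=\operatorname{gap}(T,P,r)$ has residue in $[0,r)$, so $\mathcal C+\mathcal B$ has residues in $\beta+[0,r)$. Every element of $\mathcal A+\mathcal B$ has residue in $\beta+r+[0,K)$. These two windows have total length $r+K=P$ and are consecutive, so they are disjoint mod $P$, and no collision can occur.

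\emph{Step 3: no collision with $\mathcal A+\mathcal D$ or $\mathcal C+\mathcal D$.} Here a size comparison suffices. The smallest useful entry is
\[
\min(\mathcal A+\mathcal B)=(u-2)P+r+\beta=(u+v-2)P+\rho+\sigma-1.
\]
The relevant maxima are $\max\mathcal A=(u-1)P-1$, $\max\mathcal C=(u-1)P+\rho-1$ (the final block of $\mathcal C$ starts at $(u-1)P$ and has $\rho$ elements), and $\max\mathcal D=(v-1)P+\sigma-1$. These give
\[
\max(\mathcal A+\mathcal D)=(u+v-2)P+\sigma-2,
\]
\[
\max(\mathcal C+\mathcal D)=(u+v-2)P+\rho+\sigma-2.
\]
Both are strictly smaller than $\min(\mathcal A+\mathcal B)$, so neither block can meet the useful block.

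The only delicate point is getting the maxima of the periodic-gap sets right, in particular that the last block of $\operatorname{gap}(T,P,r)$ starts at $(u-1)P$ and has exactly $\rho$ elements, with $1\le\rho\le r$ and $1\le\sigma\le s$. This uses $s\le K+1\le P$, so that blocks of width $s$ do not overlap across periods, and also $r<P$. Once these maxima are established, Steps 1–3 complete the proof.
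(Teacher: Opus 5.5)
Your proof is correct and follows the same four-case structure as the paper. Residues modulo $P$ handle collisions within $\mathcal A+\mathcal B$ and with $\mathcal C+\mathcal B$, and size bounds handle $\mathcal A+\mathcal D$ and $\mathcal C+\mathcal D$; your comparison of $\max(\mathcal A+\mathcal D)$ and $\max(\mathcal C+\mathcal D)$ against $\min(\mathcal A+\mathcal B)=(u+v-2)P+\rho+\sigma-1$ is a cleaner version of the paper's Cases 2 and 4, which reach the same conclusion by case-splitting on block indices.
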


In the following theorem, we determine the recovery threshold of $\Shift_{r,s}$. For this, write $x_+=\max\{x,0\}$ for a real number $x$.

\begin{theorem}
\label{thm:shift-count-main-results}
The recovery threshold of $\Shift_{r,s}$ is
\begin{multline*}
N_{\Shift_{r,s}}
=
KL+(u+v-1)(r+s-1)+rL\\
{}+v(K-s+1)-2(r-\rho)-(s-\sigma)\\
{}+(u-v-1)_+
\min\bigl\{K-s+1,(r-\rho-\sigma+1)_+\bigr\}.
\end{multline*}
\end{theorem}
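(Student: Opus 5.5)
By Proposition~\ref{prop:shift-decodable}, the useful block $\mathcal A+\mathcal B$ contributes exactly $KL$ distinct values and none of them occurs in the interference blocks. The recovery threshold is therefore
\[
N_{\Shift_{r,s}}=KL+\bigl|(\mathcal A+\mathcal D)\cup(\mathcal C+\mathcal D)\cup(\mathcal C+\mathcal B)\bigr|.
\]
The plan is to compute this union exactly by writing every set as a union of intervals organized by residue blocks modulo $P=K+r$. The structural facts are:
\begin{itemize}
\item $\mathcal C=\bigcup_{i<u}\bigl(iP+[0,r_i)\bigr)$ with $r_i=r$ for $i<u-1$ and $r_{u-1}=\rho$.
\item $\mathcal D=\bigcup_{j<v}\bigl(jP+[0,s_j)\bigr)$ with $s_j=s$ for $j<v-1$ and $s_{v-1}=\sigma$.
\end{itemize}
Each pairwise sum is then a union of translated intervals.

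\emph{Step 1: $\mathcal C+\mathcal D$.} Since $r+s-1\le K+r=P$, each interval sum $(i+j)P+[0,r_i+s_j-1)$ fits inside one period window $[mP,(m+1)P)$. So $\mathcal C+\mathcal D$ is the disjoint union over $m\in[0,u+v-1)$ of $mP+[0,\ell_m)$, where $\ell_m=\max_{i+j=m}(r_i+s_j-1)$. This is $r+s-1$ for every $m$ except near the top, where the short final blocks $\rho$ and $\sigma$ lower the maximum. Counting gives $(u+v-1)(r+s-1)$ minus a correction in terms of $r-\rho$ and $s-\sigma$.

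\emph{Step 2: $\mathcal A+\mathcal D$.}
\begin{itemize}
\item It equals $\bigcup_j\bigl((u-2+j)P+r+[0,K+s_j-1)\bigr)$, where $r+K+s-1=P+s-1$. So each piece spills $s-1$ past the window boundary into the next window.
\item Within window $m$, the part $[r,P)$ is new relative to Step~1 except where it overlaps $[0,\ell_m)$. Since $\ell_m\le r+s-1$, that overlap has length $\ell_m-r$.
\item The spill $[0,s-1)$ into window $m+1$ is already covered by $\mathcal C+\mathcal D$ whenever $\ell_{m+1}\ge s-1$.
\end{itemize}
This should produce the terms $v(K-s+1)$, together with the part of the $-2(r-\rho)-(s-\sigma)$ correction coming from the last windows, where $\ell_m$ is small.

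\emph{Step 3: $\mathcal C+\mathcal B$.}
\begin{itemize}
\item It is $\bigcup_{i,\ell}\bigl((v+i+\ell)P+\delta+[0,r_i)\bigr)$ with $\delta=\rho+\sigma-r-1$, which may be negative. This is $L$ copies of $\mathcal C$, shifted so that consecutive copies overlap window by window.
\item Its distinct count is $rL$ plus the contribution of the top windows (from $i\le u-1$), less the overlap with the previous union.
\item The offset $\delta$ is chosen so that the block in window $m\ge v$ sits inside $[0,\ell_m)\cup[r,P)$ as much as possible. The leftover uncovered part lies in the gap $[\ell_m,r)$ when $\ell_m<r$. This happens only in windows with $m\ge v$, of which there are $u-v-1$ beyond the first. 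In each such window, that gap intersected with the $\mathcal C+\mathcal B$ interval has length $\min\{K-s+1,(r-\rho-\sigma+1)_+\}$, which yields the final term.
\end{itemize}

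The main obstacle is Step~3 together with the boundary windows. One must track precisely the $\ell_m$ values where $\rho<r$ or $\sigma<s$, the sign of $\delta$, and whether $u>v$. I would first prove a lemma listing $\ell_m$ explicitly, namely $r+s-1$ for $m\le v-2$, and a piecewise formula otherwise. Then I would do the union count window by window, using inclusion--exclusion inside each window of length $P$. Finally I would sum, checking the formula against the $\Shift_{2,4}$ example ($N=33$) as a sanity test of the constants.
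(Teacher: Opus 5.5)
\paragraph{Verdict.} Your overall route is the paper's, but the step that produces the last term fails. As in the paper, you strip the $KL$ useful degrees via Proposition~\ref{prop:shift-decodable}, then count $\mathcal C+\mathcal D$, the new part of $\mathcal A+\mathcal D$, and the new part of $\mathcal C+\mathcal B$, window by window modulo $P$. Steps~1 and~2 are fine in substance. More precisely, $\ell_m=r+s-1$ for all $m\le u+v-4$, $\ell_{u+v-3}=\max\{r+\sigma,s+\rho\}-1$, $\ell_{u+v-2}=\rho+\sigma-1$, and the last piece of $\mathcal A+\mathcal D$ spills only $\sigma-1$.

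\paragraph{The gap in Step~3.} You assume $[r,P)$ is covered in every window $m\ge v$, so the only hole would be $[\ell_m,r)$. But $\mathcal A+\mathcal D$ only reaches windows $u-2,\dots,u+v-3$ (plus spill). In those windows $\ell_m\ge r$, so $\mathcal C+\mathcal D$ and $\mathcal A+\mathcal D$ cover them completely. The interval $[\ell_m,r)$ can be nonempty only for $m=u+v-2$, and no uncounted degrees of $\mathcal C+\mathcal B$ lie there.

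Taken literally, your mechanism gives $0$ for the last term, which is false. Take $K=7$, $r=3$, $s=6$, $T=13$. Then $P=10$, $u=5$, $v=3$, $\rho=\sigma=1$, and $\min\mathcal B=28$. The degrees $28,29\in\mathcal C+\mathcal B$ are new, yet they lie in window~$2$, where $\ell_2=8\ge r$. The length $\min\{K-s+1,(r-\rho-\sigma+1)_+\}$ cannot arise from a gap of the form $[\ell_m,r)$ at all; you have read it off the target formula rather than derived it.

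\paragraph{The missing idea.} The holes are in windows $m'\le u-3$, below the reach of $\mathcal A+\mathcal D$. There only $[0,r+s-1)$ is covered, leaving the top gap $[r+s-1,P)$ of length $K-s+1$.

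\begin{itemize}
\item For $m\in[v,u+v-2]$, the part of the interval $mP+[\delta,\delta+r)$ lying inside window $m$ is contained in $[0,\rho+\sigma-1)\subseteq[0,\ell_m)$, so it is always old.
\item When $\delta=\rho+\sigma-r-1<0$, this interval protrudes $r+1-\rho-\sigma$ positions into the top of window $m-1$.
\item That protrusion meets the uncovered gap in exactly $\min\{K-s+1,r+1-\rho-\sigma\}$ points precisely when $m-1\le u-3$, i.e.\ for $m\in[v,u-1)$. This gives the factor $(u-v-1)_+$.
\end{itemize}

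This is the paper's set $P[v,u-1)+[\max\{\rho+\sigma-r-1,s-K-1\},0)$.

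You should also replace your ``$rL$ plus the top windows'' by an explicit count. Everything counted earlier lies below $(u+v-2)P+\rho+\sigma-1$, which is smaller than $(u+v-1)P+\delta$. Hence the windows $m\ge u+v-1$ contribute exactly $(L-1)r+\rho$ new degrees. With these two corrections the total matches the theorem.
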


When both $r$ and $s$ divide $T$, we have $\rho=r$ and $\sigma=s$, and the formula in Theorem~\ref{thm:shift-count-main-results} simplifies. In this case, we show that $\Shift_{r,s}$ always matches or improves on $\DOG_{r,s}$.

\begin{corollary}
\label{cor:shift-dog-main-results}
Suppose that $K$, $L$, and $T$ are greater or equal to $2$, that $1\le r<s\le\min\{T,K+1\}$, and $T=ur=vs$. Then,
\begin{align*}
N_{\Shift_{r,s}}
={}&KL+(u+v-1)(r+s-1)\\
&+rL+v(K-s+1).
\end{align*}
For the corresponding $\DOG_{r,s}$ construction,
\[
N_{\DOG_{r,s}}-N_{\Shift_{r,s}}=s-r-1.
\]
Therefore, $\Shift_{r,s}$ matches $\DOG_{r,s}$ when $s=r+1$ and strictly improves it when $s\ge r+2$.
\end{corollary}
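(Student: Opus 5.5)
The first formula should come directly from Theorem~\ref{thm:shift-count-main-results}, specialised to the divisible case. When $T=ur=vs$ we have $\rho=r$ and $\sigma=s$, so the correction $-2(r-\rho)-(s-\sigma)$ vanishes. The last term is a product with the factor $\min\{K-s+1,(r-\rho-\sigma+1)_+\}$. With $\rho=r$ this becomes $\min\{K-s+1,(1-s)_+\}$. Since $s\ge 2$ we have $(1-s)_+=0$, and since $s\le K+1$ we have $K-s+1\ge 0$, so the minimum is $0$. This holds whatever the sign of $u-v-1$. The remaining terms give exactly $KL+(u+v-1)(r+s-1)+rL+v(K-s+1)$.

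For the comparison we need the exact recovery threshold of $\DOG_{r,s}$ under the same hypotheses. I would first try to cite the formula from~\cite{11195364}, specialised to $T=ur=vs$. If that reference only gives the general formula, I would recompute it directly from the definition using the same counting scheme as for $\Shift_{r,s}$. The setup is $P=K+r$ and $\mathcal A=[0,K)$. The set $\mathcal C=K+\operatorname{gap}(T,P,r)$ consists of $u$ full blocks of width $r$, occupying the gaps $[K,P)+jP$ for $0\le j<u$. The set $\mathcal D=P(L-1)+K+\operatorname{gap}(T,P,s)$ consists of $v$ blocks of width $s$. I would describe each interference block as a union of intervals, one per period:
\begin{itemize}
\item $\mathcal A+\mathcal C$ is $[K,2K+r-1)+jP$, that is, $u$ translates of an interval of length $K+r-1$, which merge into one long interval when consecutive translates overlap;
\item $\mathcal C+\mathcal B$ fills the gaps above the useful block $[0,PL)$;
\item $\mathcal A+\mathcal D$ and $\mathcal C+\mathcal D$ sit above $P(L-1)+K$ and overlap partly.
\end{itemize}
Counting the union with inclusion--exclusion on these interval families should give
\[
N_{\DOG_{r,s}}=KL+(u+v-1)(r+s-1)+rL+v(K-s+1)+s-r-1 .
\]
As a sanity check, the example of Section~\ref{sec:motivating-examples} has $K=5$, $L=3$, $T=4$, $r=2$, $s=4$, so $u=2$, $v=1$. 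The proposed $\Shift$ formula gives $15+2\cdot5+6+2=33$, and adding $s-r-1=1$ gives $34$, which agrees with both figures.

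Once the two thresholds are in hand, the difference $s-r-1$ follows by subtraction. The final sentence of the corollary then follows because $s>r$ forces $s-r-1\ge 0$, with equality exactly when $s=r+1$.

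The main obstacle is getting the exact $\DOG_{r,s}$ count in this regime. The overlaps between $\mathcal A+\mathcal D$, $\mathcal C+\mathcal D$ and $\mathcal C+\mathcal B$ depend on whether $s$ exceeds $r$, on $s\le K+1$ compared with the period $P$, and on the relation between $u$ and $v$. I would handle this by working modulo $P$: residues in $[K,P)$ are gap positions and residues in $[0,K)$ are data positions. For each residue class I would count the range of multiples of $P$ that are covered. This reduces the union to a sum of interval lengths in which $u$, $v$, $r$ and $s$ enter linearly, so it can be compared term by term with the $\Shift$ count.
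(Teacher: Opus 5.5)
Your derivation of the $\Shift_{r,s}$ formula is correct and matches the paper. With $\rho=r$ and $\sigma=s$ the correction terms vanish, and $\min\{K-s+1,(1-s)_+\}=0$ since $2\le s\le K+1$. The final sentence also follows correctly from the difference.

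The gap is the $\DOG_{r,s}$ threshold, which is the whole content of the comparison: you never compute it. You write that inclusion--exclusion ``should give'' $KL+(u+v-1)(r+s-1)+rL+v(K-s+1)+s-r-1$, but that is the conclusion restated. Agreement at the single point $(K,L,T,r,s)=(5,3,4,2,4)$ proves nothing in general. Your setup also contains errors that would derail the count:
\begin{itemize}
\item $\mathcal A+\mathcal C$ is not a block of the degree table, since the table is $(\mathcal A\cup\mathcal C)+(\mathcal B\cup\mathcal D)$.
\item $\mathcal C+\mathcal B$ does not lie above $[0,PL)$; its first $L$ intervals fill the gaps $P[0,L)+[K,P)$ of the useful block.
\item No case split on $u$ versus $v$ is needed here.
\end{itemize}

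The missing argument is a short interval count with $P=K+r$ and $T=ur=vs$.

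(i) $\mathcal C+\mathcal D=P(L-1)+2K+P[0,u+v-1)+[0,r+s-1)$ is a union of $u+v-1$ disjoint intervals, because $r+s-1\le P$ follows from $s\le K+1$. This gives $(u+v-1)(r+s-1)$ terms.

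(ii) $\mathcal C+\mathcal B=K+P[0,L+u-1)+[0,r)$. Its first $L$ intervals end at $PL-1<PL+K-r=\min(\mathcal C+\mathcal D)$ because $r\le K$, so they contribute $Lr$ new terms. For $0\le j<u-1$, the interval $K+P(L+j)+[0,r)$ lies inside the index-$j$ interval $[P(L+j)+K-r,\,P(L+j)+K+s-1)$ of $\mathcal C+\mathcal D$, because $r\le s-1$. So these contribute nothing new.

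(iii) $\mathcal A+\mathcal D=P(L-1)+K+P[0,v)+[0,K+s-1)$ is a single interval because $K+s-1\ge P$, i.e.\ $s\ge r+1$. Its first $r$ elements coincide with the last new interval of $\mathcal C+\mathcal B$. Its remaining new elements are:
\begin{itemize}
\item the $K-r$ values of $[PL,PL+K-r)$, which lie below $\mathcal C+\mathcal D$;
\item the $v-1$ gaps of length $P-(r+s-1)=K-s+1$ between consecutive intervals of $\mathcal C+\mathcal D$.
\end{itemize}
The index-$(v-1)$ interval of $\mathcal C+\mathcal D$ extends past the top of $\mathcal A+\mathcal D$, so there are no further gaps. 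By (ii), none of these values meets $\mathcal C+\mathcal B$.

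Hence $N_{\DOG_{r,s}}=KL+(u+v-1)(r+s-1)+Lr+(K-r)+(v-1)(K-s+1)$. Subtracting your $\Shift_{r,s}$ formula leaves $(K-r)-(K-s+1)=s-r-1$. Without this count, or a precise citation of the $\DOG_{r,s}$ threshold that you check specializes to this expression, the comparison is unproved.
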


As shown in Fig.~\ref{fig:motivating-examples}(b), $\Shift_{2,4}$ reduces the recovery threshold from the previous best value in our comparison, $34$, to $33$ when the partitioning parameters are $K=5$ and $L=3$, and the security parameter is $T=4$.

\subsection{The \texorpdfstring{$\Cover_r$}{COVER(r)} Construction}

We now use the periodic-gap framework to define the second new family. The construction $\Cover_r$ places the entire interference block $\mathcal A+\mathcal D$ inside $\mathcal C+\mathcal D$, so that $\mathcal A+\mathcal D$ contributes no additional degrees. Assume $K<T$, and let $r$ satisfy $1\le r<T$. Set $P=K+r$.

\begin{definition}
\label{def:cover-main-results}
The $\Cover_r$ construction uses exponents
\begin{align*}
\mathcal A&=r+[0,K),\\
\mathcal B&=P[0,L),\\
\mathcal C&=\operatorname{gap}(T,P,r),\\
\mathcal D&=LP+[0,T).
\end{align*}
\end{definition}

\begin{proposition} \label{prop:cover-decodable}
    The $\Cover_r$ construction is decodable and $T$-secure.
\end{proposition}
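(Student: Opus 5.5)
Security is immediate: $\mathcal C=\operatorname{gap}(T,P,r)$ consists of $T$ distinct nonnegative integers by Definition~\ref{def:periodic-gap-set}, and $\mathcal D=LP+[0,T)$ is an interval of length $T$. Hence both sets have pairwise distinct elements. The work is in decodability: we must show that every sum $a+b$ with $a\in\mathcal A$, $b\in\mathcal B$ appears exactly once in $(\mathcal A\cup\mathcal C)+(\mathcal B\cup\mathcal D)$. We first locate the useful block. Writing $b=\ell P$ with $0\le \ell<L$ and $a=r+i$ with $0\le i<K$, the useful entries are $\ell P+r+i$. Since $P=K+r$, these are exactly the integers whose residue modulo $P$ lies in $[r,P)$ and which lie in $[0,LP)$. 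The map $(i,\ell)\mapsto \ell P+r+i$ is injective because $r+i<P$, so the useful entries are distinct among themselves.

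Next we treat the interference blocks one at a time.

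First, $\mathcal C+\mathcal B$. Every element of $\mathcal C$ has residue in $[0,r)$ modulo $P$, and adding $\ell P$ preserves that residue. Hence $\mathcal C+\mathcal B$ consists of integers with residue in $[0,r)$. These can never equal a useful entry, whose residue lies in $[r,P)$.

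Second, $\mathcal A+\mathcal D$. Here $\min(\mathcal A+\mathcal D)=r+LP>LP-1$, while every useful entry is at most $(L-1)P+r+K-1=LP-1$. So this block is disjoint from the useful block.

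Third, $\mathcal C+\mathcal D$. Since $\mathcal C\ge 0$, we have $\min(\mathcal C+\mathcal D)\ge LP$, which again exceeds every useful entry.

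Combining these three facts with the injectivity above, each useful entry occurs exactly once. Decodability then follows from the paper's standing degree-table criterion over a sufficiently large field.

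The only mildly delicate point is the residue argument for $\mathcal C+\mathcal B$. It relies on $r<P$ and on the elements of $\operatorname{gap}(T,P,r)$ having residues in $[0,r)$. This is true because $r<T$, so the periodic-gap set genuinely uses width $r$ blocks. The hypothesis $K<T$ is not needed for decodability; it matters only for the redundancy of $\mathcal A+\mathcal D$ inside $\mathcal C+\mathcal D$, which enters the recovery-threshold computation.
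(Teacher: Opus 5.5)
Your proof is correct and follows the paper's argument. The paper also checks the four quadrants: it uses $P>K$ for injectivity within UL, residues mod $P$ ($[0,r)$ versus $[r,P)$) to separate $\mathcal C+\mathcal B$, and the bound that every useful entry is at most $LP-1$ while $\mathcal A+\mathcal D$ and $\mathcal C+\mathcal D$ lie at or above $LP$. One small point: elements of $\operatorname{gap}(T,P,r)$ have residues in $[0,r)$ directly by definition together with $r<P$, so your appeal to $r<T$ is not the reason.
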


We now determine the recovery threshold of $\Cover_r$. 

\begin{theorem}
\label{thm:cover-count-main-results}
The recovery threshold of $\Cover_r$ is
\begin{align*}
N_{\Cover_r}
={}&KL+Lr+\left(\left\lceil\frac{T}{r}\right\rceil-1\right)K+2T-1.
\end{align*}
\end{theorem}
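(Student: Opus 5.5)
The plan is to count the distinct entries of the degree table block by block, using Proposition~\ref{prop:cover-decodable}. Decodability says every entry of $\mathcal A+\mathcal B$ appears exactly once in the table, so these $KL$ values are distinct and lie outside the three interference blocks. Hence
\[
N_{\Cover_r}=KL+\bigl|(\mathcal A+\mathcal D)\cup(\mathcal C+\mathcal B)\cup(\mathcal C+\mathcal D)\bigr|.
\]
Let $u=\lceil T/r\rceil$ and $\rho=T-(u-1)r\in[1,r]$. Then $\mathcal C=\bigcup_{j=0}^{u-1}\bigl(jP+[0,r_j)\bigr)$, where $r_j=r$ for $j<u-1$ and $r_{u-1}=\rho$. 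Since $r<T$, we have $u\ge 2$.

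\textbf{Step 1: $\mathcal C+\mathcal D$ is an interval.} Because $\mathcal D=LP+[0,T)$ is an interval, each block of $\mathcal C$ produces the interval $LP+jP+[0,r_j+T-1)$. Consecutive intervals overlap or abut: the $j$-th one ends at $LP+jP+r+T-1$, and since $K<T$ we have $P=K+r\le r+T-1$. So $\mathcal C+\mathcal D$ is the single interval
\[
LP+\bigl[0,(u-1)P+\rho+T-1\bigr).
\]
Its size is $(u-1)(K+r)+T-(u-1)r+T-1=(u-1)K+2T-1$.

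\textbf{Step 2: $\mathcal A+\mathcal D\subseteq\mathcal C+\mathcal D$.} We have $\mathcal A+\mathcal D=LP+r+[0,K+T-1)$. Its largest element is $LP+P+T-2$. This is below the right endpoint of the interval in Step~1 exactly when $P\le (u-1)P+\rho$, which holds because $u\ge2$. So $\mathcal A+\mathcal D$ contributes no new degrees.

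\textbf{Step 3: the new part of $\mathcal C+\mathcal B$.} We have $\mathcal C+\mathcal B=\bigcup_{\ell,j}\bigl((\ell+j)P+[0,r_j)\bigr)$. Grouping by $m=\ell+j\in[0,L+u-1)$, it is a union of blocks $mP+[0,r)$, except that the final block $m=L+u-2$ has width $\rho$. These blocks are pairwise disjoint because $r<P$.
\begin{itemize}
\item For $m\le L-1$, each block lies below $(L-1)P+r<LP$. So these blocks are disjoint from $\mathcal C+\mathcal D$, and together they contribute exactly $Lr$ new values.
\item For $m\ge L$, each block lies in $LP+[0,(u-2)P+r)\subseteq \mathcal C+\mathcal D$. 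For $m=L+u-2$ the block is $(L+u-2)P+[0,\rho)$, whose largest element is still inside $\mathcal C+\mathcal D$. So these blocks are absorbed.
\end{itemize}

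Summing gives $N_{\Cover_r}=KL+Lr+(u-1)K+2T-1$, as claimed. The main point needing care is Step~1: the hypothesis $K<T$ is what makes consecutive blocks of $\mathcal C+\mathcal D$ abut, so that this block is a gapless interval. The remaining steps are routine endpoint comparisons, with the last block of width $\rho$ handled separately.
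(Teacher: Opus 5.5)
Your proposal is correct and follows essentially the same route as the paper's proof: you show $\mathcal C+\mathcal D$ is the single interval $LP+[0,(u-1)P+\rho+T-1)$ using $K<T$, absorb $\mathcal A+\mathcal D$ into it using $u\ge 2$, and find that only the $L$ blocks of $\mathcal C+\mathcal B$ below $LP$ are new, which contributes $Lr$. Your endpoint comparisons, including the separate treatment of the final width-$\rho$ block, all check out.
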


The containment $\mathcal A+\mathcal D\subseteq\mathcal C+\mathcal D$ is what distinguishes $\Cover_r$: the upper-right interference block is completely redundant. The $\Cover_r$ construction lies in the regime where $L\le K<T$. In this regime the original GASP scheme uses $\GASP_{\mathrm{big}}$~\cite{9004505}. In this regime, $\GASP_{\mathrm{big}}=\GASP_K$ and has recovery threshold $2KL+2T-1$~\cite{9004505}. The following corollary gives a simple condition under which $\Cover_r$ improves it.

\begin{corollary}
\label{cor:cover-gasp-main-results}
If $Lr+\left(\left\lceil\frac{T}{r}\right\rceil-1\right)K<KL$,
then $\Cover_r$ has a strictly smaller recovery threshold than $\GASP_{\mathrm{big}}$.
\end{corollary}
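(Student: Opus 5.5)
This corollary follows directly from Theorem~\ref{thm:cover-count-main-results} together with the known threshold of $\GASP_{\mathrm{big}}$ in the regime $L\le K<T$. We compare the two closed forms, so no new counting of degree-table entries is needed.

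First, Theorem~\ref{thm:cover-count-main-results} gives
\[
N_{\Cover_r}=KL+Lr+\left(\left\lceil\tfrac{T}{r}\right\rceil-1\right)K+2T-1 .
\]
The hypotheses of the $\Cover_r$ construction are $K<T$ and $1\le r<T$. Together with the standing assumption $L\le K$, these place us in the regime $L\le K<T$.

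Second, in this regime $\min\{K,T\}=K$, so $\GASP_{\mathrm{big}}=\GASP_K$. Its recovery threshold is $N_{\GASP_{\mathrm{big}}}=2KL+2T-1$, as recalled from~\cite{9004505} just before the corollary.

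Subtracting the two formulas, the terms $KL$ and $2T-1$ cancel, leaving
\[
N_{\GASP_{\mathrm{big}}}-N_{\Cover_r}=KL-Lr-\left(\left\lceil\tfrac{T}{r}\right\rceil-1\right)K .
\]
This difference is strictly positive exactly when $Lr+\left(\lceil T/r\rceil-1\right)K<KL$, which is the stated hypothesis. Hence $N_{\Cover_r}<N_{\GASP_{\mathrm{big}}}$.

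Once Theorem~\ref{thm:cover-count-main-results} is granted, there is no real obstacle. The only point needing care is checking that the cited GASP threshold $2KL+2T-1$ applies under the same parameter regime, namely $L\le K<T$. If I wanted to be self-contained there, I would recount the distinct entries of the $\GASP_K$ table directly. With $\mathcal C=KL+\operatorname{gap}(T,K,K)=KL+[0,T)$ and $\mathcal D=KL+[0,T)$, the set $\mathcal A+\mathcal B=[0,KL)$ contributes $KL$ entries. The three interference blocks together cover $[KL,2KL+2T-1)$, using $K<T$ to see that $\mathcal C+\mathcal B$ and $\mathcal A+\mathcal D$ fill in around $\mathcal C+\mathcal D=2KL+[0,2T-1)$. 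This gives $KL+(KL+2T-1)$ distinct values in total.
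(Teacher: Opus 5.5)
Your proposal is correct and is essentially the paper's own argument: the paper states the corollary without a separate proof, relying on the recalled threshold $N_{\GASP_{\mathrm{big}}}=2KL+2T-1$ for $L\le K<T$. Subtracting $N_{\Cover_r}$ from Theorem~\ref{thm:cover-count-main-results} cancels $KL+2T-1$ and leaves $KL-Lr-(\lceil T/r\rceil-1)K$, which is positive exactly under the hypothesis. Your optional recount of the $\GASP_K$ table, whose interference union is $[KL,2KL+2T-1)$ because $K<T$, is also correct, whereas the paper simply takes the $\GASP_{\mathrm{big}}$ threshold from the original GASP paper.
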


For example, let $L=3$, $T=K+1$, and let $K\ge5$ be odd. Choosing $r=(K+1)/2$, $\Cover_r$ saves $(K-3)/2$ servers over $\GASP_{\mathrm{big}}$. For $K=5$, $L=3$, and $T=6$, $\GASP_{\mathrm{big}}$ attains the previous best-known recovery threshold $41$, while $\Cover_3$ achieves $40$. As shown in Fig.~\ref{fig:motivating-examples}(d), $\Cover_2$ also improves the state of the art for $K=L=5$ and $T=6$, lowering the best-known recovery threshold from $57$ to $56$.

\subsection{A General Lower Bound}

The following lower bound holds for all partitioning and security parameters.

\begin{theorem}
\label{thm:general-lower-bound-main-results}
For all positive integers $K$, $L$, and $T$,
\[
\Nstar(K,L,T)\ge KL+(KLT^2)^{1/3}.
\]
\end{theorem}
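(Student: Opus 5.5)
The plan is to reduce the bound to a statement about the interference entries alone, and then obtain that statement from an injective map into a triple product. The recovery threshold of a degree table is the number of distinct entries of $(\mathcal A\cup\mathcal C)+(\mathcal B\cup\mathcal D)$. Decodability requires every useful entry to occur exactly once in the whole table. Hence the $KL$ sums $a+b$ with $a\in\mathcal A$, $b\in\mathcal B$ are pairwise distinct, and none of them lies in the interference set
\[
\mathcal I=(\mathcal A+\mathcal D)\cup(\mathcal C+\mathcal B)\cup(\mathcal C+\mathcal D).
\]
So every decodable, $T$-secure table has $N=KL+|\mathcal I|$. Since $N^\star(K,L,T)$ is the minimum of $N$ over such tables, it suffices to prove, for any such table, that
\[
|\mathcal I|\ge (KLT^2)^{1/3}.
\]
This is equivalent to $|\mathcal I|^3\ge KLT^2$.

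To get this, I will use the map
\[
\Phi:\mathcal A\times\mathcal B\times\mathcal C\times\mathcal D\to\mathcal I^3,\qquad (a,b,c,d)\mapsto(a+d,\;c+b,\;c+d).
\]
Each coordinate lies in $\mathcal I$ by construction. The domain has exactly $KLT^2$ elements, because the exponents in $\mathcal A$ and $\mathcal B$ index distinct blocks and degree-table security makes the elements of $\mathcal C$ (and of $\mathcal D$) pairwise distinct, so $|\mathcal C|=|\mathcal D|=T$. Therefore, once $\Phi$ is shown to be injective, we get $KLT^2\le|\mathcal I|^3$, which is the required bound.

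Injectivity comes from the identity $(a+d)+(c+b)-(c+d)=a+b$. Suppose $\Phi(a,b,c,d)=(x,y,z)$. Then $x+y-z=a+b$ is a useful entry. By decodability it arises from a unique pair $(a,b)\in\mathcal A\times\mathcal B$, so $a$ and $b$ are determined by $(x,y,z)$. The remaining coordinates are then forced: $d=x-a$ and $c=z-d$. Hence $(a,b,c,d)$ is recovered from its image, and $\Phi$ is injective.

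The only step that needs care is the uniqueness of the useful sums $a+b$, that is, that the pair $(a,b)$ is determined by the value $a+b$. This is exactly the decodability condition (each red entry is unique in the whole table), so I expect no real obstacle; the substance of the argument is simply finding the map $\Phi$. The proof should also state explicitly that the count $N=KL+|\mathcal I|$ uses the fact that the useful entries are disjoint from $\mathcal I$.
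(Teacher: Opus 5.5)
Your proof is correct, and it takes a more direct route than the paper's. The paper argues locally. It pigeonholes a value occurring in at least $LT/m$ rows of LL and a value occurring in at least $KT/m$ columns of UR. It then proves a forbidden-configuration claim (Claim~\ref{Claim forbidden submatrices}) from the same identity $(a+d)+(c+b)-(c+d)=a+b$ that you use. From this it deduces that the corresponding $|I|\times|J|$ subgrid of LR has pairwise distinct entries, which gives $m\ge KLT^2/m^2$.

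You instead apply the identity globally, as an injection $\Phi$ of all $KLT^2$ quadruples into $\mathcal I^3$. This avoids the averaging step and the auxiliary claim. Since $\Phi$ actually lands in $(\mathcal A+\mathcal D)\times(\mathcal C+\mathcal B)\times(\mathcal C+\mathcal D)$, it even gives $|\mathcal A+\mathcal D|\,|\mathcal C+\mathcal B|\,|\mathcal C+\mathcal D|\ge KLT^2$. That is exactly what the paper's argument yields if one keeps $|t(\mathrm{UR})|$ and $|t(\mathrm{LL})|$ in place of $m$, so the two proofs have the same strength.

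The paper's route has the merit of locating many distinct interference values inside the block LR. Yours makes it transparent how Theorem~\ref{thm:balanced-lower-main-results} sharpens this result. The balanced proof is precisely your injection (with $C$ and $D$ interchanged), restricted to quadruples whose three sums fall into the disjoint parts $X_{A+C},X_{B+D},X_{C+D}$. AM--GM then supplies the factor $3$ once the energy estimates show that almost all quadruples qualify.

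One small point: justify $|\mathcal A|=K$ and $|\mathcal B|=L$ by decodability, since two equal $a_k$ would give two equal entries in a column of UL. Alternatively, index the domain of $\Phi$ by $[K]\times[L]\times[T]\times[T]$. As written, ``index distinct blocks'' is not by itself a reason.
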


The term $KL$ counts the distinct useful entries, while the second term lower-bounds the number of distinct interference entries. In the balanced setting $K=L=T=n$, the theorem gives $\Nstar(n,n,n)\ge n^2+n^{4/3}$, whereas \cite[Theorem~2]{9508383} gives $n^2+3n$ for $n\geq2$. Thus, the new bound raises the asymptotic order of the excess above $n^2$ from $n$ to $n^{4/3}$. In the next subsection, we sharpen the leading coefficient from $1$ to $3$.

\subsection{Optimality in the Balanced Regime}

We now consider the balanced setting $K=L=T=n$. Every valid degree table contains $n^2$ useful degrees, so it remains to lower-bound the number of distinct degrees in the three interference blocks. The following combinatorial theorem gives the required bound for arbitrary subsets of $\mathbb R$ and may be of independent interest.

\begin{theorem}
\label{thm:balanced-lower-main-results}
Let $n$ be a positive integer, and $A,B,C,D\subseteq\mathbb R$ satisfy $|A|=|B|=|C|=|D|=n$ and $|A+B|=n^2$. Then
\[
\left|(A+D)\cup(C+B)\cup(C+D)\right|
\ge
3n^{4/3}-O(n^{7/6}),
\]
where the $O(\cdot)$ term is uniform over all choices of $A,B,C,D$. Consequently, $\Nstar(n,n,n)
\ge
n^2+3n^{4/3}-O(n^{7/6})$.
\end{theorem}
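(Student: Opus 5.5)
The plan is to first prove a clean multiplicative inequality for the three interference blocks, and then upgrade it to an additive bound on their union. Write $X=A+D$, $Y=C+B$, $Z=C+D$ and $U=X\cup Y\cup Z$.

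\textbf{Step 1 (a Ruzsa-type injection).} We show $|A+B|\,|C|\,|D|\le|X|\,|Z|\,|Y|$.
\begin{itemize}
\item For every $x\in A+B$ fix one representation $x=a_x+b_x$; here it is in fact unique, since $|A+B|=n^2$.
\item Define $\phi(x,c,d)=(a_x+d,\;c+d,\;c+b_x)\in X\times Z\times Y$.
\item The map $\phi$ is injective. From its image we recover $x=(a_x+d)-(c+d)+(c+b_x)$, hence $a_x$ and $b_x$. We then recover $d$ from the first coordinate and $c$ from the third.
\item This gives $|X|\,|Y|\,|Z|\ge n^4$.
\end{itemize}
The same argument with general sizes gives $|X||Y||Z|\ge KLT^2$, so $|U|\ge\max\{|X|,|Y|,|Z|\}\ge(KLT^2)^{1/3}$. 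This is presumably how Theorem~\ref{thm:general-lower-bound-main-results} is proved.

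\textbf{Step 2 (from product to sum).} The constant $3$ would follow from AM--GM, $|X|+|Y|+|Z|\ge 3(|X||Y||Z|)^{1/3}\ge 3n^{4/3}$, if the three blocks were disjoint. They are not; in $\Shift_{r,s}$ the overlaps are deliberate. So instead I would refine the injection so that it only uses "fresh" elements. The ordering of $\mathbb R$ is the tool I would try first.
\begin{itemize}
\item Sort $C$ and $D$ and split each into $m$ consecutive chunks $C_1<\dots<C_m$ and $D_1<\dots<D_m$, with $m$ a slowly growing power of $n$ (something like $n^{1/6}$ should be the source of the $O(n^{7/6})$ error).
\item For each chunk pair $(C_i,D_j)$, rerun Step 1 with $C_i,D_j$ in place of $C,D$. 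This gives $|A+D_j|\,|C_i+B|\,|C_i+D_j|\ge n^2|C_i||D_j|$.
\item Use the ordering to argue that sums landing at the extreme ends of $U$ are attributable to a single block. For example, the largest elements of $U$ should come from sums using the top chunks, and sums of one block with extremal chunks cannot be repeated in the other blocks too often.
\end{itemize}

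\textbf{Step 3 (assembling the bound).} The aim is to find three disjoint subsets $X'\subseteq X$, $Y'\subseteq Y$, $Z'\subseteq Z$ of $U$ satisfying
\[
|X'|\,|Y'|\,|Z'|\ge n^4\bigl(1-O(n^{-1/6})\bigr).
\]
AM--GM then yields $|U|\ge 3n^{4/3}-O(n^{7/6})$. To construct them, I would again exploit that the image triple $(p,q,t)=\phi(x,c,d)$ satisfies $p-q+t=x$, a constraint that ties the coordinates together. I would choose $X',Y',Z'$ by thresholding on the real line, via a pigeonhole over the $m^2$ chunk pairs.

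The consequence for $\Nstar(n,n,n)$ is immediate: the $n^2$ useful degrees are distinct from all interference degrees by decodability.

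The main obstacle is Step 3, namely controlling the overlaps between the blocks rather than just their sizes. The product inequality is exact and easy, but converting it into a bound on $|U|$ with the sharp constant $3$ requires showing that large overlaps force some block to be substantially bigger. I expect this to need a careful order-based or energy argument, and this is where the $n^{7/6}$ loss would enter.
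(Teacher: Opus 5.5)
Your Step 1 is correct. It is exactly the injectivity argument the paper uses in the last step of its proof, where, after swapping the names of $C$ and $D$, it uses the map $(a,b,c,d)\mapsto(a+c,b+d,c+d)$. Step 1 also gives Theorem~\ref{thm:general-lower-bound-main-results} directly, although the paper proves that bound by a different pigeonhole/forbidden-submatrix argument. Your Step 3 target is also precisely what the paper establishes: disjoint $X'\subseteq X$, $Y'\subseteq Y$, $Z'\subseteq Z$ with $|X'||Y'||Z'|\ge n^4(1-O(n^{-1/6}))$.

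The gap is that this step is the entire content of the theorem, and you have not supplied it. The order-based plan has no mechanism for controlling overlaps. The chunked inequality $|A+D_j|\,|C_i+B|\,|C_i+D_j|\ge n^2|C_i||D_j|$ is just Step 1 applied to smaller sets, and it says nothing about how the three sumsets intersect. The heuristic that extreme elements of $U$ belong to a single block fails at the level of sets. The positions of $A$ and $B$ on the line are unconstrained, and blocks can overlap almost completely; for example, $A+D\subseteq C+D$ in $\Cover_r$. What the hypothesis $|A+B|=n^2$ controls is not where the sums lie but how their \emph{representations} are shared between blocks.

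The missing idea is a majority assignment combined with mixed-energy bounds.
\begin{enumerate}
\item Put each $t\in U$ into the block in which it has the most representations. This partitions $U=X'\sqcup Y'\sqcup Z'$.
\item The number of discarded representations is $E=\sum_t\bigl(r_{\min}(t)+r_{\mathrm{mid}}(t)\bigr)$. This is at most the sum over $t$ of the three pairwise minima of $r_{A+D}(t)$, $r_{C+B}(t)$, $r_{C+D}(t)$. By $\min\{p,q\}\le\sqrt{pq}$ and Cauchy--Schwarz, it is therefore at most $|U|^{1/2}$ times the sum of the square roots of the three mixed energies.
\item The hypothesis $|A+B|=n^2$ forces $r_{A-B}\le 1$. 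Since $a+d=c+b$ iff $a-b=c-d$, this gives $\sum_t r_{A+D}(t)r_{C+B}(t)\le n^2$.
\item The translates $a+B$ ($a\in A$) are pairwise disjoint. Hence $r_{A-C}(y)\le|C+B|/n$ and $r_{B-D}(y)\le|A+D|/n$. These give $\sum_t r_{A+D}(t)r_{C+D}(t)\le n|U|$ and $\sum_t r_{C+B}(t)r_{C+D}(t)\le n|U|$.
\item Assume $|U|<3n^{4/3}$, since otherwise there is nothing to prove. Then $E\le|U|^{1/2}n+2|U|n^{1/2}=O(n^{11/6})$.
\item Consequently, at most $n^2E=O(n^{23/6})$ quadruples $(a,b,c,d)$ have one of $a+d$, $c+b$, $c+d$ outside its assigned part.
\item Restricted to the remaining quadruples, your injection lands in $X'\times Y'\times Z'$. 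This gives the product bound, and AM--GM finishes.
\end{enumerate}
Note that the $n^{-1/6}$ loss is $E/n^2$, driven by the $|U|n^{1/2}$ term, and not by any chunk size.
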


The $\Shift$ construction gives a matching upper bound up to lower-order terms.

\begin{corollary}
\label{cor:balanced-shift-upper-main-results}
As $n\to\infty$, $\Nstar(n,n,n)
\le
n^2+3n^{4/3}+O(n)$. 
In particular, when $n=m^3$ with $m\ge2$, choosing $r=m$ and $s=m^2$ in Theorem~\ref{thm:shift-count-main-results} gives
\[
N_{\Shift_{m,m^2}}
=
n^2+3n^{4/3}+n-n^{2/3}-n^{1/3}+1.
\]
\end{corollary}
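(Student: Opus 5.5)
The specific identity follows by direct substitution into Theorem~\ref{thm:shift-count-main-results}. The asymptotic bound follows by applying the same theorem with $r=\lfloor n^{1/3}\rfloor$ and $s=r^2$ for general $n$, and bounding each term crudely.

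\emph{Exact case $n=m^3$.} Take $K=L=T=n=m^3$, $r=m$, $s=m^2$ with $m\ge 2$. The hypotheses $1\le r<s\le\min\{T,K+1\}$ hold. Since $r\mid T$ and $s\mid T$, we have $u=m^2$, $v=m$, $\rho=r$ and $\sigma=s$. The correction terms $-2(r-\rho)-(s-\sigma)$ therefore vanish. The last term also vanishes, because $r-\rho-\sigma+1=1-s<0$. What remains is the formula of Corollary~\ref{cor:shift-dog-main-results}:
\[
n^2+(m^2+m-1)^2+m\cdot m^3+m(m^3-m^2+1).
\]
Expanding gives $(m^2+m-1)^2=m^4+2m^3-m^2-2m+1$. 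Adding $m^4$ and $m^4-m^3+m$ yields
\[
n^2+3m^4+m^3-m^2-m+1 .
\]
With $m^4=n^{4/3}$, $m^3=n$, $m^2=n^{2/3}$ and $m=n^{1/3}$, this is the stated value. Since $\Nstar(n,n,n)\le N_{\Shift_{m,m^2}}$, the bound holds along $n=m^3$.

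\emph{General $n$.} Padding $n$ up to the next cube does not work: it would cost $n'^2-n^2=\Theta(n^{5/3})$, which is too large. So I apply the theorem at $n$ itself. Set $r=\lfloor n^{1/3}\rfloor$ and $s=r^2$. For $n\ge 8$ these satisfy $2\le r<s\le n$. Put $u=\lceil n/r\rceil$ and $v=\lceil n/s\rceil$, and discard the nonpositive terms $-2(r-\rho)-(s-\sigma)$. Bound the remaining terms as follows.

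\begin{itemize}
\item $(u+v-1)(r+s-1)\le (n/r+n/s+1)(r+s)=n+ns/r+nr/s+n/r+r+s+n$. Because $s/r=r\le n^{1/3}$, this is $n^{4/3}+O(n)$.
\item $rL=rn\le n^{4/3}$.
\item $v(K-s+1)\le (n/s+1)(n+1)$. Since $s\ge (n^{1/3}-1)^2$, we get $n/s=n^{1/3}\bigl(1+O(n^{-1/3})\bigr)$, so this term is $n^{4/3}+O(n)$.
\item The last term is at most $u\cdot(r-\rho-\sigma+1)_+\le u r\le n+r=O(n)$.
\end{itemize}

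Summing gives $N_{\Shift_{r,s}}\le n^2+3n^{4/3}+O(n)$, and hence the same bound for $\Nstar(n,n,n)$.

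The main obstacle is the rounding. The replacements $n/s=n^{1/3}+O(1)$ and $nr/s=O(n^{2/3})$ must cost only $O(n)$ after multiplication by $n$. The choice $s=r^2$ makes $s/r=r\le n^{1/3}$ exact, which avoids an error in the dominant $ns/r$ term. The only remaining estimate is the relative error $O(n^{-1/3})$ in $n/s$, which contributes $O(n)$.
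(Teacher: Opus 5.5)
Your proof is correct and follows the paper's argument: direct substitution for $n=m^3$ (your expansion to $n^2+3m^4+m^3-m^2-m+1$ checks out), and for general $n$ the choice $r=\lfloor n^{1/3}\rfloor$, $s\approx n^{2/3}$ in Theorem~\ref{thm:shift-count-main-results}. The paper takes $s=\lfloor n^{2/3}\rfloor$ rather than $s=r^2$, which also gives $ns/r=n^{4/3}+O(n)$, so your choice is a convenience rather than a necessity; and the expansion of $(n/r+n/s+1)(r+s)$ is $2n+ns/r+nr/s+r+s$, so your extra $n/r$ term is a harmless slip that does not affect the $O(n)$ bound.
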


Combining Theorem~\ref{thm:balanced-lower-main-results} and Corollary~\ref{cor:balanced-shift-upper-main-results} gives the following asymptotic result.

\begin{theorem}
\label{thm:balanced-main-results}
As $n\to\infty$, $\Nstar(n,n,n)
=
n^2+3n^{4/3}+O(n^{7/6})$.
\end{theorem}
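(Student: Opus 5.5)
Theorem~\ref{thm:balanced-main-results} follows by combining the two bounds already stated; the only work is to reconcile the error terms and to pass from the special values $n=m^3$ to all $n$.

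\emph{Lower bound.} Theorem~\ref{thm:balanced-lower-main-results} applies to every valid degree table with $K=L=T=n$. Decodability forces the $n^2$ entries of $\mathcal A+\mathcal B$ to be distinct and to appear nowhere else in the table. Hence the number of distinct entries is $n^2$ plus the size of the union of the three interference blocks. That union is at least $3n^{4/3}-O(n^{7/6})$, uniformly over all tables, so
\[
\Nstar(n,n,n)\ge n^2+3n^{4/3}-O(n^{7/6}).
\]

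\emph{Upper bound.} Corollary~\ref{cor:balanced-shift-upper-main-results} gives $\Nstar(n,n,n)\le n^2+3n^{4/3}+O(n)$ for all large $n$. Since $O(n)\subseteq O(n^{7/6})$, the two bounds together give $\Nstar(n,n,n)=n^2+3n^{4/3}+O(n^{7/6})$.

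\emph{The main obstacle.} The corollary is stated for all $n\to\infty$, but the exact formula is given only for $n=m^3$. If the general upper bound must be justified, I would apply Theorem~\ref{thm:shift-count-main-results} to arbitrary $n$ with $r=\lceil n^{1/3}\rceil$ and $s=\lceil n^{2/3}\rceil$. These are admissible for large $n$, since $r<s\le n+1$. Then $u=\lceil n/r\rceil\le n^{2/3}+1$ and $v=\lceil n/s\rceil\le n^{1/3}+1$. The formula then has four kinds of terms:
\begin{itemize}
\item $(u+v-1)(r+s-1)=n^{2/3}\cdot n^{2/3}+O(n)=n^{4/3}+O(n)$;
\item $rL=n^{4/3}+O(n)$;
\item $v(K-s+1)\le(n^{1/3}+1)n=n^{4/3}+O(n)$;
\item the correction terms: the subtracted ones are nonpositive, and $(u-v-1)_+\min\{K-s+1,(r-\rho-\sigma+1)_+\}\le u\cdot r=O(n)$.
\end{itemize}
The total is therefore at most $n^2+3n^{4/3}+O(n)$. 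The only care needed is in checking these ceiling estimates, and the $O(n^{7/6})$ slack from the lower bound leaves ample room.
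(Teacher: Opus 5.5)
Your proposal is correct and matches the paper: the theorem is obtained by combining the uniform lower bound of Theorem~\ref{thm:balanced-lower-main-results} with the upper bound $n^2+3n^{4/3}+O(n)$ of Corollary~\ref{cor:balanced-shift-upper-main-results}, absorbing $O(n)$ into $O(n^{7/6})$. Your optional check of the general-$n$ upper bound is the same computation as the paper's proof of that corollary, using $\lceil n^{1/3}\rceil$ and $\lceil n^{2/3}\rceil$ where the paper uses floors; note that admissibility actually requires $s\le\min\{T,K+1\}=n$, not $s\le n+1$, but your choice of $s$ satisfies this anyway.
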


Thus, the excess above the unavoidable $n^2$ useful degrees is $3n^{4/3}+O(n^{7/6})$, and the $\Shift$ construction attains the leading term. Any improvement can affect only lower-order terms. The interference bound requires only $|\mathcal A+\mathcal B|=|\mathcal A||\mathcal B|$, which says that the useful degrees are distinct within their block. Decodability then ensures that the useful and interference degrees are disjoint, yielding the lower bound on $\Nstar(n,n,n)$.

\section{Proofs of the main results}

In this section, we give proofs of the upper bounds in Theorems \ref{thm:shift-count-main-results} and \ref{thm:cover-count-main-results} and lower bounds in Theorems \ref{thm:general-lower-bound-main-results} and \ref{thm:balanced-lower-main-results}.

For the proofs, write $\alpha=(a_1,\ldots,a_K,c_1,\ldots,c_T)=(\alpha',\alpha'')$ and $\beta=(b_1,\ldots,b_L,d_1,\ldots,d_T)=(\beta',\beta'')$, where $\alpha'$ and $\beta'$ contain the data exponents and $\alpha''$ and $\beta''$ contain the masking exponents. We write $\alpha\oplus\beta$ for the $(K+T)\times(L+T)$ array whose $(i,j)$ entry is $\alpha_i+\beta_j$. We partition this array into the four blocks $\mathrm{UL}=\alpha'\oplus\beta'$, $\mathrm{UR}=\alpha'\oplus\beta''$, $\mathrm{LL}=\alpha''\oplus\beta'$, and $\mathrm{LR}=\alpha''\oplus\beta''$. For any subarray $M$, let $t(M)$ denote the set of distinct entries appearing in $M$.

\subsection{Upper bounds}

\begin{proof}[Proof of Proposition \ref{prop:shift-decodable}]
   It is obvious that $\Shift_{r,s}$ is $T$-secure. Thus, we consider decodability. If the scheme is not decodable, then some term $x$ of UL equals some $y$ in one of UL, UR, LL, or LR, where $y$ occupies a distinct position from $x$.

    \textit{Case 1:} $y$ appears in UL. Then we have $(u-2)P+r+x_1+vP+\rho+\sigma-r-1+Py_1=(u-2)P+r+x_2+vP+\rho+\sigma-r-1+Py_2$ for some $x_1,x_2\in[0,K)$ and $y_1,y_2\in[0,L)$. Thus, $x_1+Py_1=x_2+Py_2$ and $x_1-x_2=P(y_2-y_1)$. But $x_1-x_2\in(-K,K)$ and $P>K$, so this implies $y_1=y_2$ and $x_1=x_2$.

    \textit{Case 2:} $y$ appears in UR. Then we have $$\begin{aligned}(u-2)P+&r+x_1+vP+\rho+\sigma-r-1+Py_1=\\
    &(u-2)P+r+x_2+Py_2+z_2\end{aligned}$$
    where $x_1\in[0,K)$, $y_1\in[0,L)$, $x_2\in[0,K)$, $y_2\in[0,v)$ and $z_2\in[0,s)$. This reduces to
    $$x_1+P(v+y_1-y_2)+\rho+\sigma-1=r+x_2+z_2.$$
    Now if $y_1-y_2\ge -v+2$, then the left-hand side is at least $2P+1$, while the right-hand side is at most $r+K-1+s-1=P+s-2\le P+K-1\le 2P-1$, a contradiction. Otherwise, we must have $y_1=0$ and $y_2=v-1$, in which case $z_2\le\sigma-1$. Then the left-hand side is at least $P+\sigma$ while the right-hand side is at most $r+K-1+\sigma-1=P+\sigma-2$, a contradiction.
    
    \textit{Case 3:} $y$ appears in LL. Then we have
    $$\begin{aligned}(u-2)P&+r+x_1+vP+\rho+\sigma-r-1+Py_1=\\
    &Py_2+z_2+vP+\rho+\sigma-r-1+Pw_2\end{aligned}$$
    where $x_1\in[0,K)$, $y_1\in[0,L)$, $y_2\in[0,u)$, $z_2\in[0,r)$, and $w_2\in[0,L)$. Hence,
    $$(u-2)P+r+x_1+Py_1=Py_2+z_2+Pw_2$$
    and $x_1-z_2+r\in P\mathbb Z$. But $x_1-z_2+r\in[1,K+r-1]$, a contradiction.

    \textit{Case 4:} $y$ appears in LR. Then we have
    $$(u-2)P+r+x_1+vP+\rho+\sigma-r-1+Py_1=y_2P+z_2+y_2'P+z_2'$$
    where $x_1\in[0,K)$, $y_1\in[0,L)$, $y_2\in[0,u)$, $y_2'\in[0,v)$, $z_2\in [0,r)$, and $z_2'\in[0,s)$. This reduces to
    $$P(u-2+v+y_1-y_2-y_2')=z_2+z_2'+1-x_1-\rho-\sigma.$$ The right-hand side is at least $1-K+1-r-s\ge 2-P-s\ge 2-P-K-1>-2P$ and at most $r-1+s-1+1-1-1=r+s-3\le r+K-2<P$. Thus, $u-2+v+y_1-y_2-y_2'\in\{-1,0\}.$ But $u-2+v+y_1-y_2-y_2'\ge u-2+v+0-u+1-v+1=0$, with equality only if $y_1=0$, $y_2=u-1$, and $y_2'=v-1$. Then $z_2<\rho$ and $z_2'<\sigma$, and the right-hand side is at most $\rho-1+\sigma-1+1-0-\rho-\sigma=-1,$
    a contradiction.
\end{proof}

\begin{proof}[Proof of Theorem \ref{thm:shift-count-main-results}]
    Let $\alpha\oplus\beta$ be the degree table of $\Shift_{r,s}$. For brevity, write $a=\mathcal A$, $b=\mathcal B$, $c=\mathcal C$, and $d=\mathcal D$. We count $t(\alpha\oplus \beta)$ by counting the terms of UL, LR, UR, and LL, in that order, counting only new terms in each step. Proposition \ref{prop:shift-decodable} implies that $|t(\mathrm{UL})|=KL$, and the terms of UL are distinct from all others in the table.

    Note that $c=(P[0,u-1)+[0,r))\cup (P(u-1)+[0,\rho))$ and $d=(P[0,v-1)+[0,s))\cup(P(v-1)+[0,\sigma)).$ Thus
    $$\begin{aligned}
        c+d=&\,(P[0,u-1)+[0,r)+P[0,v-1)+[0,s))\\
        &\cup (P[0,u-1)+[0,r)+P(v-1)+[0,\sigma))\\
        &\cup(P(u-1)+[0,\rho)+P[0,v-1)+[0,s))\\
        &\cup(P(u-1)+[0,\rho)+P(v-1)+[0,\sigma))\\
        =&\,(P[0,u+v-3)+[0,r+s-1))\\
        &\cup(P[v-1,u+v-2)+[0,r+\sigma-1))\\
        &\cup(P[u-1,u+v-2)+[0,s+\rho-1))\\
        &\cup(P(u+v-2)+[0,\rho+\sigma-1))\\
        =&\,P[0,u+v-3)+[0,r+s-1)\\
        &\sqcup(P(u+v-3)+[0,\max\{r+\sigma,s+\rho\}-1))\\
        &\sqcup (P(u+v-2)+[0,\rho+\sigma-1)).
    \end{aligned}$$
    The last union is disjoint because $s\le K+1$ implies $r+s-1\le K+r=P$, and each interval appearing above has length at most $r+s-1$. Thus, $|t(\mathrm{LR})|=(u+v-3)(r+s-1)+\max\{r+\sigma,s+\rho\}+\rho+\sigma-2.$ Now
    $$\begin{aligned}
        a+d=&\,(u-2)P+r+[0,K)\\
        &+((P[0,v-1)+[0,s))\cup(P(v-1)+[0,\sigma)))\\
        =&\,P[u-2,u+v-3)+[r,K+r+s-1)\\
        &\cup( P(u+v-3)+[r,K+r+\sigma-1))\\
    \end{aligned}$$
    Now $K+r+\sigma-1\ge r+\sigma-1$ and $K+r+\sigma-1\ge s-1+r+\sigma-1\ge s-1+\rho+1-1=s+\rho-1$. Thus, the sets $P[u-2,u+v-3)+[r,r+s-1)$ and $P(u+v-3)+[r,\max\{r+\sigma,s+\rho\}-1)$ are already contained in $t(\mathrm{LR})$. We are left with
    $$\begin{aligned}
        &(P[u-2,u+v-3)+[r+s-1,K+r+s-1))\\
        \cup&(P(u+v-3)+[\max\{r+\sigma,s+\rho\}-1,K+r+\sigma-1))\\
        =&P[u-1,u+v-2)+[s-K-1,s-1)\\
        \cup&(P(u+v-2)+[\max\{\sigma-K,s+\rho-K-r\}-1,\sigma-1)).
    \end{aligned}$$
    using $P=K+r$. Now $P+s-K-1\ge r+s-1$ and $P+\max\{\sigma-K,s+\rho-K-r\}\ge\max\{\sigma+r,s+\rho\}$. It follows that the new terms in UR are
    $$\begin{aligned}
        &(P[u-1,u+v-2)+[s-K-1,0))\\
        \cup&(P(u+v-2)+[\max\{\sigma-K,s+\rho-K-r\}-1,0))
    \end{aligned}$$
    and the number of new terms contributed by UR is 
    $$ (v-1)(K+1-s)-\max\{\sigma-K,s+\rho-K-r\}+1. $$
    Now
    $$\begin{aligned}
    c+ b=&\,((P[0,u-1)+[0,r))\cup(P(u-1)+[0,\rho)))\\
    &+(vP+\rho+\sigma-r-1+P[0,L))\\
    =&\,(P[v,L+u+v-2)+[\rho+\sigma-r-1,\rho+\sigma-1))\\
    &\cup(P[u+v-1,L+u+v-1)\\
    &+[\rho+\sigma-r-1,2\rho+\sigma-r-1))\\
    =&\,(P[v,L+u+v-2)+[\rho+\sigma-r-1,\rho+\sigma-1))\\
    &\sqcup(P(L+u+v-2)\\
    &+[\rho+\sigma-r-1,2\rho+\sigma-r-1)).
    \end{aligned}$$
    Since $\rho+\sigma-r-1\ge1-r\ge1-K>-P$ and $\rho+\sigma-1\le r+s-1\le P$, the interval $[\rho+\sigma-r-1,\rho+\sigma-1)$ contains no nonzero multiple of $P$. Moreover, $2\rho+\sigma-r-1\le\rho+\sigma-1$, so neither does $[\rho+\sigma-r-1,2\rho+\sigma-r-1)$. Both intervals have length at most $r\le P$, so their translates by distinct multiples of $P$ are disjoint. Note that the terms $P[v,u+v-1)+[0,\rho+\sigma-1)$ are already contained in $c+d$. Also, all of the $$2\rho+\sigma-r-1-\rho-\sigma+r+1=\rho$$ terms in $P(L+u+v-2)+[\rho+\sigma-r-1,2\rho+\sigma-r-1)$
    and the
    $$(L+u+v-2-u-v+1)(\rho+\sigma-1-\rho-\sigma+r+1)=(L-1)r$$
    terms in $P[u+v-1,L+u+v-2)+[\rho+\sigma-r-1,\rho+\sigma-1)$ are new. It remains to analyze
    $$P[v,u+v-1)+[\rho+\sigma-r-1,0).$$
    The old terms that lie in this range are obtained by intersecting the displayed range with the union of the following five sets:
    \begin{enumerate}
        \item $P[v,u+v-2)+[\rho+\sigma-r-1,s-K-1)$
        \item $P(u+v-2)+[\rho+\sigma-r-1,\max\{r+\sigma,s+\rho\}-K-r-1)$
        \item $P(u+v-1)+[\rho+\sigma-r-1,\rho+\sigma-K-r-1)$
        \item $P[u-1,u+v-2)+[\max\{s-K-1,\rho+\sigma-r-1\},0)$
        \item $P(u+v-2)+[\max\{\sigma-K-1,s+\rho-K-r-1,\rho+\sigma-r-1\},0)$.
    \end{enumerate}
    Note that (3) is empty. Note that $\sigma-K-1=r+\sigma-K-r-1\le\max\{r+\sigma,s+\rho\}-K-r-1$ and $s+\rho-K-r-1\le\max\{r+\sigma,s+\rho\}-K-r-1$, so that the union of (2) and (5) is $$P(u+v-2)+[\rho+\sigma-r-1,0).$$ Now concerning (1) and (4), the union of $[\rho+\sigma-r-1,s-K-1)$ and $[\max\{s-K-1,\rho+\sigma-r-1\},0)$ is $[\rho+\sigma-r-1,0)$. Thus, if $v\le u-1$ then the union of (1) and (4) is
    $$\begin{aligned}
    &P[v,u-1)+[\rho+\sigma-r-1,s-K-1)\\
    \cup&P[u-1,u+v-2)+[\rho+\sigma-r-1,0)
    \end{aligned}$$
    and the remaining new elements are $P[v,u-1)+[\max\{\rho+\sigma-r-1,s-K-1\},0)$, of which there are
    $$(u-v-1)\min\{(r+1-\rho-\sigma)_+,K+1-s\};$$
    while if $v>u-1$ then the union of (1) and (4) is
    $$\begin{aligned}
        &P[u-1,v)+[\max\{s-K-1,\rho+\sigma-r-1\},0)\\
        \cup&P[v,u+v-2)+[\rho+\sigma-r-1,0).
    \end{aligned}$$
    and, after intersecting this union with the range under consideration, there are no additional new elements. Putting all this together, we find that $|t(\alpha\oplus\beta)|$ equals
    $$\begin{aligned}
       &\,KL+(u+v-3)(r+s-1)\\
        &+\max\{r+\sigma,s+\rho\}+\rho+\sigma-2\\
        &+(v-1)(K+1-s)\\
        &-\max\{\sigma-K,s+\rho-K-r\}+1\\
        &+\rho+(L-1)r\\
        &+(u-v-1)_+\min\{(r+1-\rho-\sigma)_+,K+1-s\}\\
        =&\,KL+(u+v-1)(r+s-1)-2r-2s+2\\
        &+\max\{r+\sigma,s+\rho\}+\rho+\sigma-2\\
        &+v(K-s+1)-K-1+s\\
        &-\max\{r+\sigma,s+\rho\}+K+r+1\\
        &+\rho+Lr-r\\
        &+(u-v-1)_+\min\{K-s+1,(r-\rho-\sigma+1)_+\}\\
        =&\,KL+(u+v-1)(r+s-1)-2r-s\\
        &+2\rho+\sigma\\
        &+v(K-s+1)+Lr\\
        &+(u-v-1)_+\min\{K-s+1,(r-\rho-\sigma+1)_+\}.
    \end{aligned}$$
\end{proof}

\begin{proof}[Proof of Corollary \ref{cor:shift-dog-main-results}]
Since $\rho=r$ and $\sigma=s$, the formula for $N_{\Shift_{r,s}}$ follows immediately from Theorem~\ref{thm:shift-count-main-results}. For the corresponding $\DOG_{r,s}$ construction, set $P=K+r$. Its three interference blocks are
\[
\begin{aligned}
\mathcal C+\mathcal D
&=P(L-1)+2K+P[0,u+v-1)+[0,r+s-1),\\
\mathcal C+\mathcal B
&=K+P[0,L+u-1)+[0,r),\\
\mathcal A+\mathcal D
&=P(L-1)+K+P[0,v)+[0,K+s-1).
\end{aligned}
\]
Since $r+s-1\le P$, the first block contains $(u+v-1)(r+s-1)$ terms. The first $L$ intervals of $\mathcal C+\mathcal B$ are new, while the remaining intervals are contained in $\mathcal C+\mathcal D$, so this block contributes $Lr$ additional terms.

Since $K+s-1\ge P$, the block $\mathcal A+\mathcal D$ is an interval. Its first $r$ terms are contained in $\mathcal C+\mathcal B$. The remaining new terms consist of one gap of length $K-r$ and $v-1$ gaps of length $K-s+1$ in $\mathcal C+\mathcal D$. Therefore,
\begin{multline*}
    N_{\DOG_{r,s}} 
= \\
KL+(u+v-1)(r+s-1)+Lr
+K-r+(v-1)(K-s+1).
\end{multline*}
Subtracting the formula for $N_{\Shift_{r,s}}$ gives
\[
N_{\DOG_{r,s}}-N_{\Shift_{r,s}}=s-r-1.
\]

\end{proof}

\begin{proof}[Proof of Corollary \ref{cor:balanced-shift-upper-main-results}]
    For general $n$, choose $r=\lfloor n^{1/3}\rfloor$ and $s=\lfloor n^{2/3}\rfloor$. For all sufficiently large $n$, these parameters satisfy the assumptions of Theorem \ref{thm:shift-count-main-results}. We have $r=n^{1/3}+O(1)$, $s=n^{2/3}+O(1)$, $u=n^{2/3}+O(n^{1/3})$, and $v=n^{1/3}+O(1)$, while $1\leq\rho\leq r$ and $1\leq\sigma\leq s$. Substituting these estimates into Theorem \ref{thm:shift-count-main-results} gives
    \[
    N_{\Shift_{r,s}}=n^2+3n^{4/3}+O(n).
    \]

    When $n=m^3$, choosing $r=m$ and $s=m^2$ gives $u=m^2$, $v=m$, $\rho=m$, and $\sigma=m^2$. Substituting these values into Theorem \ref{thm:shift-count-main-results} gives the stated exact formula.
\end{proof}

\begin{proof}[Proof of Proposition \ref{prop:cover-decodable}]
    The $T$-security is immediate; we prove decodability. Let $\alpha\oplus\beta$ be the degree table of $\Cover_r$, and let UL, UR, LL, and LR be quadrants of $\alpha\oplus\beta$. Suppose that $x=y$, where $x$ is a term of UL and $y$ is a term of $\alpha\oplus\beta$. 

    \textit{Case 1:} $y$ appears in UL. Then we have 
    $r+x_1+Py_1=r+x_2+Py_2$ for some $x_1,x_2\in[0,K)$ and $y_1,y_2\in[0,L)$. Thus $x_1-x_2=P(y_2-y_1)$. We have $-K<x_1-x_2<K$ and $P>K$, so this implies $y_2-y_1=0$, hence $x_1=x_2$. Then $x,y$ occupy the same position.

    \textit{Case 2:} $y$ appears in UR. Then we have $r+x_1+Py_1=r+x_2+LP+z_1$ for some $x_1,x_2\in[0,K)$, $y_1\in[0,L)$, and $z_1\in[0,T)$, so $x_1+Py_1=x_2+LP+z_1$. Thus, $K-1+P(L-1)\ge LP$, so $K-1-P\ge 0$, contradicting $P=K+r$.

    \textit{Case 3:} $y$ appears in LL. Note that $\mathcal C\subseteq P[0,\lceil T/r\rceil)+[0,r)$. Thus, we have $r+x_1+Py_1=Pz_1+w_1+Py_2$ for some $x_1\in[0,K)$, $y_1,y_2\in[0,L)$, $z_1\in[0,\lceil T/r\rceil)$, and $w_1\in[0,r)$. Hence, $r+x_1-w_1=P(z_1+y_2-y_1)$. Now, $0=r+0-r<r+x_1-w_1<r+K-0=P$, so the left-hand side is not a multiple of $P$, a contradiction.

    \textit{Case 4:} $y$ appears in LR. Then we have $r+x_1+Py_1=Pz_1+w_1+LP+u_1$ for some $x_1\in[0,K)$, $y_1\in[0,L)$, $z_1\in[0,\lceil T/r\rceil)$, $w_1\in[0,r)$, and $u_1\in[0,T).$ Thus, $r+(K-1)+P(L-1)\ge LP$, so $r+K-1-P\ge 0$, contradicting $P=K+r$.
\end{proof}

\begin{proof}[Proof of Theorem \ref{thm:cover-count-main-results}]
    Proposition \ref{prop:cover-decodable} implies that the terms of UL are distinct and distinct from all others in the table. We count the new terms contributed by LR, UR, and LL. For brevity, write $a=\mathcal A$, $b=\mathcal B$, $c=\mathcal C$, and $d=\mathcal D$. Set $u=\lceil T/r\rceil$, $\rho=T-r(u-1)$ and note that
    $c=(P[0,u-1)+[0,r))\cup(P(u-1)+[0,\rho))$. Then
    $$\begin{aligned}
    c+d=&\,((P[0,u-1)+[0,r))\cup(P(u-1)+[0,\rho)))\\
    &+(LP+[0,T))\\
    =&\,(P[0,u-1)+[0,r)+LP+[0,T))\\
    &\cup(P(u-1)+[0,\rho)+LP+[0,T))\\
    =&\,(P[L,L+u-1)+[0,T+r-1))\\
    &\cup(P(L+u-1)+[0,T+\rho-1)).
    \end{aligned}$$
        Since $K<T$, we have $T+r-1\ge K+r=P$, so the consecutive intervals above overlap or are adjacent. Their union is therefore $[PL,PL+P(u-1)+T+\rho-1)$, which contains $P(u-1)+T+\rho-1=Pu+T+\rho-P-1$ terms.
    Now
    $$\begin{aligned}
        a+d=&\,r+[0,K)+LP+[0,T)\\
        =&\,PL+[r,r+K+T-1).
    \end{aligned}$$
    Since $u\ge2$ and $\rho\ge1$, its lower endpoint is at least $PL$, while its upper endpoint satisfies $PL+r+K+T-1=PL+P+T-1\le PL+P(u-1)+T+\rho-1$. Thus, this interval is contained in $c+d$. Finally,
    $$\begin{aligned}
        c+b=&\,((P[0,u-1)+[0,r))\cup(P(u-1)+[0,\rho)))\\
        &+P[0,L)\\
        &=(P[0,L+u-2)+[0,r))\\
        &\cup(P[u-1,L+u-1)+[0,\rho))\\
        &=(P[0,L+u-2)+[0,r))\\
        &\cup(P(L+u-2)+[0,\rho)).
    \end{aligned}$$
    Every term of $c+b$ that is at least $PL$ is less than $PL+P(u-1)+T+\rho-1$, and hence already belongs to $c+d$. Therefore, the only new terms are those below $PL$, namely $P[0,L)+[0,r)$. Since $r<P$, these are $L$ disjoint intervals of length $r$, so they contribute $Lr$ terms.
    We find a total of
    $$\begin{aligned}
        &\,KL+Pu+T+\rho-P-1+Lr\\
        =&\,KL+(K+r)(u-1)+T+\rho-1+Lr\\
        =&\,KL+K(u-1)+2T+Lr-1\\
        =&\,KL+(\lceil T/r\rceil -1)K+2T+Lr-1.
    \end{aligned}$$
\end{proof}

\subsection{Proof of the General Lower Bound}

\begin{proof}[Proof of Theorem \ref{thm:general-lower-bound-main-results}]
Let $\alpha\in\mathbb Z_{\geq0}^{K+T}$ and $\beta\in\mathbb Z_{\geq0}^{L+T}$ define a decodable and $T$-secure degree table. Our strategy is to use only equality and cancellation among the entries of $\alpha\oplus\beta$ and identify a pattern of equalities forbidden by decodability.

 \begin{claim} \label{Claim forbidden submatrices}
        There do not exist indices $x,y,z,w\in[T]$, with $x\ne y$ and $z\ne w$, such that all of the following hold:
        \begin{enumerate}
            \item in the columns of UR corresponding to $\beta_{L+x}$ and $\beta_{L+y}$ there are two equal entries;
            \item in the rows of LL corresponding to $\alpha_{K+z}$ and $\alpha_{K+w}$ there are two equal entries; and
            \item $\beta_{L+x}+\alpha_{K+z}=\beta_{L+y}+\alpha_{K+w}.$
        \end{enumerate}
    \end{claim}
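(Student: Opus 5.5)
Suppose for contradiction that such indices exist. The plan is to chain the three equalities and cancel terms until a useful entry coincides with another table entry, which decodability forbids.

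Condition~(1) says there are $i,i'\in[K]$ with
\[
\alpha_i+\beta_{L+x}=\alpha_{i'}+\beta_{L+y},
\qquad\text{i.e.}\qquad
\beta_{L+x}-\beta_{L+y}=\alpha_{i'}-\alpha_i.
\]
Two equal entries in columns $x$ and $y$ can also lie in the same column: if they are in column $x$, then $\alpha_i=\alpha_{i'}$. That forces $i=i'$, since the row $\alpha_i$ of UL would otherwise repeat, which decodability forbids when $L\ge1$. So these equal entries are either in different columns or are the same entry. The intended reading is "in different columns," and I would make this explicit.

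Similarly, condition~(2) gives $j,j'\in[L]$ with
\[
\alpha_{K+z}-\alpha_{K+w}=\beta_{j'}-\beta_j.
\]
Condition~(3) says $\beta_{L+x}-\beta_{L+y}=\alpha_{K+w}-\alpha_{K+z}$.

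Combining the three relations gives $\alpha_{i'}-\alpha_i=\beta_j-\beta_{j'}$, that is,
\[
\alpha_{i'}+\beta_{j'}=\alpha_i+\beta_j .
\]
Both sides are useful entries in UL. Decodability requires every UL entry to be unique in the whole table, so the two positions must coincide: $i=i'$ and $j=j'$.

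Substituting back, $i=i'$ gives $\beta_{L+x}=\beta_{L+y}$, which contradicts $T$-security because $x\ne y$ and the masking exponents $d_t$ are pairwise distinct. (The case $j=j'$ similarly gives $\alpha_{K+z}=\alpha_{K+w}$, contradicting $z\ne w$.)

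The main obstacle is purely interpretive: condition~(1) must be read so that the two equal entries come from different columns, one in column $L+x$ and one in column $L+y$, and likewise for the rows in condition~(2). Once that is fixed, the argument is a single cancellation.
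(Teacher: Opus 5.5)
Your proof is correct and is essentially the paper's argument. In both, the equalities from (1) and (2) are combined with (3) to obtain $\alpha_{i'}+\beta_{j'}=\alpha_i+\beta_j$ in UL, decodability then forces $i=i'$ and $j=j'$, and this yields $\beta_{L+x}=\beta_{L+y}$, contradicting the pairwise distinctness of the masking exponents. Your clarification that the two equal entries lie one in each of the two columns (respectively rows) is the same reading the paper uses implicitly.
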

    \begin{proof}
        Suppose such indices $x,y,z,w$ existed. From (1) we have that $\alpha_i+\beta_{L+x}=\alpha_j+\beta_{L+y}$, for some (necessarily distinct) $i,j\in[K]$. From (2) we have $\beta_g+\alpha_{K+z}=\beta_h+\alpha_{K+w}$ for some $g,h\in [L]$. Adding these equations together, and subtracting (3), we obtain $\alpha_i+\beta_g=\alpha_j+\beta_h$. Both sides are entries of UL. Since the degree table is decodable, they must occupy the same position, so $i=j$ and $g=h$. The first two equalities then give $\beta_{L+x}=\beta_{L+y}$ and $\alpha_{K+z}=\alpha_{K+w}$. Since the masking exponents are pairwise distinct within their respective sets, this implies $x=y$ and $z=w$, contradicting the assumptions.
    \end{proof}
    We illustrate the forbidden submatrix in Figure \ref{Figure forbidden submatrix}.

\begin{figure}
\begin{center}
\begin{tikzpicture}[scale=0.5]
\draw (0, 10) -- (10, 10) -- (10, 0) -- (0, 0) -- (0, 10);
\draw (0, 5) -- (10, 5);
\draw (5, 0) -- (5, 10);
\draw [dashed] (0, 3) -- (10, 3);
\draw [dashed] (0, 2) -- (10, 2);
\draw [dashed] (6, 0) -- (6, 10);
\draw [dashed] (8, 0) -- (8, 10);
\filldraw [black] (6, 9) circle (0.2);
\filldraw [black] (8, 6) circle (0.2);
\filldraw [red] (6, 3) circle (0.2);
\filldraw [red] (8, 2) circle (0.2);
\filldraw [blue] (2, 3) circle (0.2);
\filldraw [blue] (3, 2) circle (0.2);
\draw (6, 10.5) node{$\beta_{L+x}$};
\draw (8, 10.5) node{$\beta_{L+y}$};
\draw (-1, 3) node{$\alpha_{K+z}$};
\draw (-1, 2) node{$\alpha_{K+w}$};
\end{tikzpicture}
\end{center}
\caption{The forbidden submatrix of Claim \ref{Claim forbidden submatrices}. The two black/red/blue positions contain the same entry.} \label{Figure forbidden submatrix}
\end{figure}

Let $|t(\alpha\oplus\beta)|=KL+m$. By decodability, the $KL$ terms in UL are distinct from all other terms, so $m$ is exactly the number of distinct terms outside UL. In particular, $m\ge1$. Decodability also implies that $\alpha_1,\ldots,\alpha_K$ are pairwise distinct and that $\beta_1,\ldots,\beta_L$ are pairwise distinct, since otherwise two entries of UL in the same column or row would be equal.

The block LL contains $LT$ entries and at most $m$ distinct terms, so some term $a$ occurs at least $LT/m$ times in LL. Since $\beta_1,\ldots,\beta_L$ are pairwise distinct, $a$ occurs at most once in each row of LL. Let $I\subseteq[T]$ be the set of rows containing $a$. Then $|I|\ge LT/m$. Similarly, some term $b$ occurs in at least $KT/m$ columns of UR. Let $J\subseteq[T]$ be the set of these columns. Then $|J|\ge KT/m$.

We claim that the entries of LR in the positions in $I\times J$ are all distinct. Otherwise, suppose that two distinct positions $(i,j)$ and $(i',j')$ contain the same entry, so that $\alpha_{K+i}+\beta_{L+j}=\alpha_{K+i'}+\beta_{L+j'}$. Since the masking exponents are pairwise distinct, we must have $i\ne i'$ and $j\ne j'$. The repeated terms $a$ and $b$, together with this equality, satisfy the three conditions of Claim~\ref{Claim forbidden submatrices} with $(x,y,z,w)=(j,j',i,i')$, a contradiction. This situation is illustrated in Figure~\ref{Figure 2}.

\begin{figure}
\begin{center}
\begin{tikzpicture}[scale=0.5]
\fill[lightgray] (6,2) rectangle (8,4);
\draw (0, 10) -- (10, 10) -- (10, 0) -- (0, 0) -- (0, 10);
\draw (0, 5) -- (10, 5);
\draw (5, 0) -- (5, 10);
\draw [dashed] (0, 4) -- (10, 4);
\draw [dashed] (0, 2) -- (10, 2);
\draw [dashed] (6, 0) -- (6, 10);
\draw [dashed] (8, 0) -- (8, 10);
\filldraw [red] (6, 10) circle (0.2);
\filldraw [red] (6.5, 9.5) circle (0.2);
\filldraw [red] (7, 9) circle (0.2);
\filldraw [red] (7.5, 8.5) circle (0.2);
\filldraw [red] (8, 8) circle (0.2);
\filldraw [blue] (0, 4) circle (0.2);
\filldraw [blue] (0.5, 3.5) circle (0.2);
\filldraw [blue] (1, 3) circle (0.2);
\filldraw [blue] (1.5, 2.5) circle (0.2);
\filldraw [blue] (2, 2) circle (0.2);
\draw (7, 10.5) node{$J$};
\draw (-0.5, 3) node{$I$};
\end{tikzpicture}
\end{center}
\caption{The proof of Theorem \ref{thm:general-lower-bound-main-results}. The red/blue terms are all equal, and the terms in the grey box are all distinct.}
\label{Figure 2}
\end{figure}
Thus, the number of distinct terms in LR is at least $|I\times J|\ge (LT/m)(KT/m)=KLT^2/m^2$. Since every term in LR lies outside UL by decodability, these terms are among the $m$ distinct terms counted above. Therefore, $m\ge KLT^2/m^2$, so $m\ge(KLT^2)^{1/3}$. Hence $|t(\alpha\oplus\beta)|=KL+m\ge KL+(KLT^2)^{1/3}$.
\end{proof}

    \subsection{Sharpening the lower bound in the balanced regime}
We now assume that $|{A}| = |{B}| = |{C}| = |{D}|=n$ and prove Theorem \ref{thm:balanced-lower-main-results} by sharpening the proof of Theorem \ref{thm:general-lower-bound-main-results} in this case. 


To prove Theorem \ref{thm:balanced-lower-main-results}, we combine energy arguments with a simple injectivity argument. We need the following energy estimates. For finite sets $E,F\subseteq\mathbb R$, let $r_{E+F}(x)$ denote the number of pairs $(e,f)\in E\times F$ such that $e+f=x$, and define $r_{E-F}(x)$ similarly.

\begin{lemma}\label{energy lemma 1}
    Let $A,B,C,D$ be finite sets of real numbers and $|A+B| = |A||B|$. Then 
    \[
    \sum_{x} r_{A+C}(x)r_{B+D}(x) \leq |C||D|.
    \]
\end{lemma}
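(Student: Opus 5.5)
The sum $\sum_x r_{A+C}(x)\,r_{B+D}(x)$ counts quadruples $(a,c,b,d)\in A\times C\times B\times D$ with $a+c=b+d$. The plan is to show that this solution set injects into $C\times D$, using the hypothesis $|A+B|=|A||B|$. That hypothesis says the map $(a,b)\mapsto a+b$ is injective on $A\times B$. Equivalently, every difference in $A-A$ has a unique representation against $B-B$ in the following sense: if $a-a'=b'-b$ with $a,a'\in A$ and $b,b'\in B$, then $a=a'$ and $b=b'$.

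\textbf{Step 1: the injection.} Send each solution $(a,c,b,d)$ to the pair $(c,d)$. It suffices to show that for fixed $(c,d)$ there is at most one pair $(a,b)$ with $a+c=b+d$.

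\textbf{Step 2: injectivity.} Fix $(c,d)$ and suppose $(a,b)$ and $(a',b')$ both satisfy the equation:
\[
a+c=b+d, \qquad a'+c=b'+d.
\]
Subtracting gives $a-a'=b-b'$, which rearranges to $a+b'=a'+b$. Both sides are sums of an element of $A$ and an element of $B$. By the injectivity of addition on $A\times B$, this forces $a=a'$ and $b'=b$.

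\textbf{Step 3: conclusion.} Each $(c,d)$ is therefore hit by at most one solution, so the number of solutions is at most $|C||D|$. Since that number equals $\sum_x r_{A+C}(x)\,r_{B+D}(x)$, the lemma follows.

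The only point needing care is Step 2. The differencing must be arranged so that the cross sums $a+b'$ and $a'+b$ both land in $A+B$; subtracting the two equations with $c,d$ held fixed does exactly this. This is the same cancellation used in Claim~\ref{Claim forbidden submatrices}, so I expect no real obstacle.
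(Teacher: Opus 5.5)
Your proof is correct and is essentially the paper's argument. The paper rewrites the sum as $\sum_x r_{A-B}(x)r_{D-C}(x)$ and uses $r_{A-B}(x)\le 1$ (which follows from $|A+B|=|A||B|$); that is exactly your statement that each pair $(c,d)$ determines at most one $(a,b)$, with your injection into $C\times D$ expressing the same cancellation without difference-set notation.
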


\begin{proof}
    If $a+c = b+d$, then $a-b = d-c$, and hence
    \[
    \sum_{x} r_{A+C}(x)r_{B+D}(x) = \sum_x r_{A-B}(x)r_{D-C}(x).
    \]
    If $|A+B| = |A||B|$ then we also have that $|A-B| = |A||B|$ and so $r_{A-B}(x) \leq 1$. Hence $\sum_x r_{A-B}(x)r_{D-C}(x) \leq \sum_x r_{D-C}(x) = |C||D|$.
\end{proof}

\begin{lemma}\label{energy lemma 2}
    Let $A,B,C,D$ be finite nonempty sets of real numbers and $|A+B| = |A||B|$. Then
    \[
    \sum_x r_{A+C}(x)r_{C+D}(x) \leq \frac{|B+D|}{|B|}|C|^2,
    \]
    and 
    \[
    \sum_x r_{B+D}(x)r_{C+D}(x) \leq \frac{|A+C|}{|A|}|D|^2.
    \]
\end{lemma}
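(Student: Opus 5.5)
The plan is to rewrite each sum as a count of additive quadruples and reduce it to a bound on the maximum multiplicity of a single difference. For the first inequality, note that $\sum_x r_{A+C}(x)r_{C+D}(x)$ counts the quadruples $(a,c,c',d)\in A\times C\times C\times D$ with $a+c=c'+d$. This equation is equivalent to $a-d=c'-c$.

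Grouping the quadruples by the pair $(c,c')$ gives
\[
\sum_x r_{A+C}(x)r_{C+D}(x)=\sum_{(c,c')\in C\times C} r_{A-D}(c'-c)\le |C|^2\max_y r_{A-D}(y).
\]
It therefore suffices to show that $r_{A-D}(y)\le |B+D|/|B|$ for every real $y$.

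This is the key step, and it is where the hypothesis $|A+B|=|A||B|$ enters. Fix $y$ and let $(a_1,d_1),\ldots,(a_k,d_k)$ be the $k=r_{A-D}(y)$ pairs with $a_i-d_i=y$. The $a_i$ are pairwise distinct, since each $a_i$ determines $d_i=a_i-y$. Consider the $k|B|$ sums $b+d_i$ with $b\in B$ and $1\le i\le k$, all of which lie in $B+D$.

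These sums are pairwise distinct. Suppose $b+d_i=b'+d_j$. Substituting $d_i=a_i-y$ and $d_j=a_j-y$ gives $a_i+b=a_j+b'$. Since $|A+B|=|A||B|$, every element of $A+B$ has a unique representation, so $a_i=a_j$ and $b=b'$, and hence $i=j$. Therefore $k|B|\le|B+D|$, which gives the claimed multiplicity bound and hence the first inequality.

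The second inequality follows from the same argument with the roles of the sets swapped. The sum $\sum_x r_{B+D}(x)r_{C+D}(x)$ counts the quadruples $(b,d,c,d')\in B\times D\times C\times D$ with $b+d=c+d'$, that is, with $b-c=d'-d$. Grouping by the pair $(d,d')$ bounds the sum by $|D|^2\max_y r_{B-C}(y)$.

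For a fixed $y$, let $(b_i,c_i)$ be the pairs with $b_i-c_i=y$. The sums $a+c_i$ with $a\in A$ all lie in $A+C$. They are pairwise distinct by the same unique-representation argument in $A+B$: if $a+c_i=a'+c_j$, then substituting $c_i=b_i-y$ and $c_j=b_j-y$ gives $a+b_i=a'+b_j$. This yields $r_{B-C}(y)\le|A+C|/|A|$.

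I do not expect a real obstacle here. The only point needing care is to notice that the difference multiplicity $r_{A-D}$, respectively $r_{B-C}$, is controlled by translating a copy of $B$, respectively $A$, into a sumset, where the unique-representation property of $A+B$ forces all the translated copies to be disjoint.
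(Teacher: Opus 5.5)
Your proof is correct and follows essentially the same route as the paper: rewrite the sum as $\sum_x r_{A-D}(x)r_{C-C}(x)$, bound $r_{A-D}$ pointwise by $|B+D|/|B|$ using the pairwise disjoint translates $d_i+B=(a_i+B)-y$, and sum $r_{C-C}$ to get $|C|^2$. The only difference is that you write out the second inequality explicitly, whereas the paper obtains it from the symmetry $A\leftrightarrow B$, $C\leftrightarrow D$.
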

\begin{proof}
    If $a+c = c'+d$ then $a-d = c'-c$ and so 
    \[
    \sum_{x}r_{A+C}(x)r_{C+D}(x) = \sum_x r_{A-D}(x) r_{C-C}(x). 
    \]
    Let $x$ be such that $r_{A-D}(x)=r$ is positive. Then there are distinct $a_1,\cdots, a_r\in A$ and $d_1,\cdots, d_r \in D$ such that $d_i = a_i - x$. Note that since $|A+B| = |A||B|$, the sets $a_i + B$ are all disjoint and hence the sets $(a_i + B) - x = d_i + B$ are all disjoint. Therefore 
    \[
    |B+D| \geq \left|\bigcup_{i=1}^r(d_i + B)\right| = r_{A-D}(x) |B|.
    \]
    Thus we have 
    \begin{align*}
    &\sum_{x}r_{A+C}(x)r_{C+D}(x) =  \sum_x r_{A-D}(x) r_{C-C}(x)  \\ \leq & \frac{|B+D|}{|B|} \sum_x r_{C-C}(x) =  \frac{|B+D|}{|B|} |C|^2.
    \end{align*}
    The other inequality follows by symmetry.
\end{proof}

We are now ready to improve the lower bound.

\begin{proof}[Proof of Theorem \ref{thm:balanced-lower-main-results}]
By interchanging the names of $C$ and $D$, it is enough to prove the equivalent bound for $X=(A+C)\cup(B+D)\cup(C+D)$. If $|X|\geq3n^{4/3}$, then we are already done. Thus, we may assume that $|X|<3n^{4/3}$.

For each $x$, consider the 3 integers $r_{A+C}(x), r_{B+D}(x), r_{C+D}(x)$ and order them smallest to largest, breaking ties arbitrarily. Define $r_{min}(x), r_{mid}(x), r_{max}(x)$ to be these three values and for each $x$ label $A+C$, $B+D$ and $C+D$ with the minimum, the middle, or the maximum naturally. We partition $X$ into three sets by identifying in which of the sum sets each element of $X$ has  the maximum number of representations. That is,
\[
X = X_{A+C} \cup X_{B+D} \cup X_{C+D}
\]
where $x\in X_{A+C}$ if and only if $A+C$ was labeled with $r_{max}(x)$ and similarly for the other sets. 

We lower bound $X$ by showing that there are not many representations except those in $X_{A+C} \cup X_{B+D} \cup X_{C+D}$. To this end, let 
\[E = \sum_{x\not \in X_{A+C}} r_{A+C}(x) +    \sum_{x\not \in X_{B+D}} r_{B+D}(x) + \sum_{x\not \in X_{C+D}} r_{C+D}(x).\]

Then we have

\begin{align*}
&E  =  \sum_x \bigl(r_{min}(x) + r_{mid}(x)\bigr)\leq  \sum_x \bigl(2r_{min}(x) + r_{mid}(x)\bigr)  \\
=& \sum_{x} \min\{r_{A+C}(x),r_{B+D}(x)\} + \sum_{x} \min\{r_{B+D}(x),r_{C+D}(x)\} \\& + \sum_{x} \min\{r_{A+C}(x),r_{C+D}(x)\} \\
\leq &   \sum_x \sqrt{r_{A+C}(x)r_{B+D}(x)}+\\ &\sum_x \sqrt{r_{B+D}(x)r_{C+D}(x)}+ \sum_x \sqrt{r_{A+C}(x)r_{C+D}(x)}\\
\leq & |X|^{1/2}\left(\sum_x r_{A+C}(x)r_{B+D}(x)\right)^{1/2} + \\& |X|^{1/2}\left(\sum_x r_{B+D}(x)r_{C+D}(x)\right)^{1/2} +\\ & |X|^{1/2}\left(\sum_x r_{A+C}(x)r_{C+D}(x)\right)^{1/2} \\
\leq & |X|^{1/2}n + 2|X| n^{1/2} = O(n^{11/6}).
\end{align*}
Going from the second to the third line uses that for nonnegative numbers $a,b$ we have $\min\{a,b\} \leq \sqrt{ab}$. Going from the third to the fourth line is Cauchy-Schwarz. Going from the fourth to the fifth line uses Lemmas \ref{energy lemma 1} and \ref{energy lemma 2} and the assumption that $|A|=|B| = |C| = |D| = n$, and the last inequality uses $|X|<3n^{4/3}$. Summarizing, we have 
\begin{align}\label{duplicate weight estimate} 
    E
    \leq & |X|^{1/2}n + 2|X| n^{1/2} = O(n^{11/6}).
\end{align}

Now consider the number of quadruples $(a,b,c,d) \in A\times B\times C\times D$ such that $a+c\in X_{A+C}$, $b+d\in X_{B+D}$, and $c+d\in X_{C+D}$. Let $\mathcal{X}$ be this set of quadruples. If a quadruple does not belong to $\mathcal{X}$, then at least one of the three required conditions fails. For each failed condition, the corresponding pair is counted by one of the sums in \eqref{duplicate weight estimate}, while the remaining two coordinates can be chosen arbitrarily. Thus, the number of quadruples not satisfying the conditions is at most
\begin{align*}
& n^2\cdot E= O(n^{23/6}),
\end{align*}
and so $|\mathcal{X}| = (1-O(n^{-1/6}))n^4$. We claim that $|\mathcal{X}| \leq |X_{A+C}||X_{B+D}||X_{C+D}|$. To see this, if we have $(a,b,c,d), (a',b',c',d') \in \mathcal{X}$ with 
\begin{align*}
    a+c &= a'+c'\\
    b+d &=b'+d' \\
    c+d &=c'+d'
\end{align*}
then adding the first two equalities and subtracting the third gives $a+b=a'+b'$. Since $|A+B|=|A||B|$, this implies that $a=a'$ and $b=b'$. The first two equalities then give $c=c'$ and $d=d'$. Therefore, $(a,b,c,d)=(a',b',c',d')$. By using the AM-GM inequality, we have that 
\begin{align*}
|X| &= |X_{A+C}| + |X_{B+D}| + |X_{C+D}| \\ &\geq 
3(|X_{A+C}||X_{B+D}||X_{C+D}|)^{1/3} \\ &\geq (3 - O(n^{-1/6}))n^{4/3}.
\end{align*}
For a decodable and $n$-secure degree table with $K=L=T=n$, decodability gives $|\mathcal A|=|\mathcal B|=n$ and $|\mathcal A+\mathcal B|=n^2$, while $n$-security gives $|\mathcal C|=|\mathcal D|=n$. Thus, the theorem applies to the interference union $(\mathcal A+\mathcal D)\cup(\mathcal C+\mathcal B)\cup(\mathcal C+\mathcal D)$. Decodability also ensures that the $n^2$ useful terms in $\mathcal A+\mathcal B$ are disjoint from this interference union. Therefore, $\Nstar(n,n,n)\geq n^2+3n^{4/3}-O(n^{7/6})$.
\end{proof}

\bibliographystyle{IEEEtran}
\bibliography{references}

\end{document}